\theoremstyle{definition}
\newtheorem{definition}{Definition}
\theoremstyle{remark}
\theoremstyle{plain}
\newtheorem{theorem}{Theorem}
\newtheorem{lemma}{Lemma}
\newtheorem{corollary}{Corollary}
\begin{document}
%
% paper title
% Titles are generally capitalized except for words such as a, an, and, as,
% at, but, by, for, in, nor, of, on, or, the, to and up, which are usually
% not capitalized unless they are the first or last word of the title.
% Linebreaks \\ can be used within to get better formatting as desired.
% Do not put math or special symbols in the title.
\title{Serving Long-Context LLMs at the Mobile Edge: Test-Time Reinforcement Learning-based Model Caching and Inference Offloading}

\author{Minrui Xu, Dusit Niyato, \emph{Fellow, IEEE}, Christopher G. Brinton, \emph{Senior Member, IEEE}
\thanks{M.~Xu and D.~Niyato are with the College of Computing and Data Science, Nanyang Technological University, Singapore 639798, Singapore (e-mail: minrui001@e.ntu.edu.sg; dniyato@ntu.edu.sg). Christopher G. Brinton is with the Elmore Family School of Electrical and Computer Engineering, Purdue University, West
Lafayette, IN 47906 USA (e-mail: cgb@purdue.edu).}
}

\maketitle

% As a general rule, do not put math, special symbols or citations
% in the abstract
\begin{abstract}
% 第一句： 大语言模型在边缘计算的大背景
% 第二句：Serving LLMs 需要解决CAP问题
% 第三句：解决CAP问题是困难的
% 第四句：我们目前
Large Language Models (LLMs) can perform zero-shot learning on unseen tasks and few-shot learning on complex reasoning tasks. However, resource-limited mobile edge networks struggle to support long-context LLM serving for LLM agents during multi-round interactions with users. Unlike stateless computation offloading and static service offloading in edge computing, optimizing LLM serving at edge servers is challenging because LLMs continuously learn from context which raises accuracy, latency, and resource consumption dynamics. 
In this paper, we propose a joint model caching and inference offloading framework that utilizes test-time deep reinforcement learning (T2DRL) to optimize deployment and execution strategies for long-context LLM serving. In this framework, we analyze the performance convergence and design an optimization problem considering the utilization of context windows in LLMs. Furthermore, the T2DRL algorithm can learn in both the training phase and the testing phase to proactively manage cached models and service requests and adapt to context changes and usage patterns during execution. To further enhance resource allocation efficiency, we propose a double Dutch auction (DDA) mechanism, which dynamically matches supply and demand while maximizing social welfare. Finally, experimental results demonstrate that the T2DRL algorithm can reduce system costs by at least 30\% compared to baselines while guaranteeing the performance of LLM agents in real-world perception and reasoning tasks.
% Previous frameworks for computation and service offloading are inadequate for optimizing LLM services as they fail to capture the dynamic learning capabilities of LLMs and the need for service adaptation based on evolving model contexts. 
\end{abstract}

\begin{IEEEkeywords}
Mobile edge networks, large language models (LLMs), deep reinforcement learning (DRL), auction theory
\end{IEEEkeywords}

% For peer review papers, you can put extra information on the cover
% page as needed:
% \ifCLASSOPTIONpeerreview
% \begin{center} \bfseries EDICS Category: 3-BBND \end{center}
% \fi
%
% For peerreview papers, this IEEEtran command inserts a page break and
% creates the second title. It will be ignored for other modes.
\IEEEpeerreviewmaketitle

\section{Introduction}

Large Language Models (LLMs)~\cite{xu2024large} have billions of parameters and demonstrate emerging capability through pre-training on internet-scale datasets. In mobile edge networks, LLMs can empower mobile devices to precept environments and plan for future actions, for running LLM agents on devices and enabling zero-shot learning for unseen perception 
tasks and few-shot learning for complex reasoning tasks~\cite{brown2020language, bubeck2023sparks}. For instance, LLMs can empower mobile devices with environmental understanding and reasoning capabilities, allowing them to adapt flexibly to changing conditions~\cite{zhou2024large}. Furthermore, LLMs improve the experience of user interaction and achieve automation for mobile devices during the inference process by leveraging the contextual data derived from environmental perceptions and user interactions. However, how to achieve a balance among context, accuracy, and performance for long-context LLM serving in resource-constraint mobile edge networks is still an open question~\cite{zeng2024cap}.

\begin{figure}[t]
    \centering
    \includegraphics[width=0.9\linewidth]{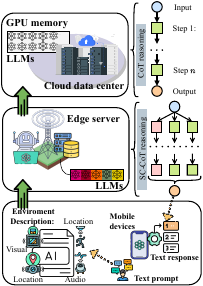}
    \caption{Serving LLMs to handle inputs and tackle complex tasks using CoT prompting in mobile edge networks.}
    \label{fig:system}
\end{figure}

As illustrated in Fig.~\ref{fig:system}, resource-intensive tasks such as LLM serving, i.e., the deployment and inference of LLMs to perform inference tasks, in mobile edge networks~\cite{zeng2024cap}, often need to be offloaded to edge servers (ESs) or cloud data centers to reduce latency and improve resource utilization efficiency~\cite{ye2024galaxy}. Due to the unique context-aware characteristics of LLMs that they can learn in-context examples during few-shot learning~\cite{brown2020language}, LLM serving at ESs is significantly different from computation and service offloading~\cite{pan2022retention, liu2023cache}. Specifically, computation offloading involves transferring computation-intensive tasks from local devices to ESs or a cloud data center~\cite{zhao2022edgeadaptor}, which is stateless and independent of previous computation outcomes or the historical results of the servers. Similarly, service offloading involves sending user inputs to ESs to invoke pre-cached services~\cite{ma2020cooperative}, which is static and remains unchanged across different invocations. As a result, existing optimization frameworks for computation offloading and service offloading are no longer effective for optimizing long-context LLM serving, as they fail to accommodate the dynamic learning capabilities of LLMs and the need for service adaptation based on evolving model contexts.

Similar to content caching and service caching in mobile edge networks, model caching for LLMs emerges as a promising solution for provisioning LLM agents with cached LLMs at ESs~\cite{hu2024memserve, singh2024llm}. Specifically, LLMs can learn from input examples and adjust the output by eliciting their reasoning capability via prompting techniques, such as Chain-of-Thought (CoT)~\cite{wei2022chain} and Self-Consistent CoT~\cite{wang2022self} prompting. These CoT prompting can be utilized to improve accuracy but introduce additional resource consumption and latency. Nevertheless, the performance of the LLM agent is not only constrained by the hardware resources available on ESs but also limited by the context windows of LLMs, which are determined by the model architecture. If the model’s inference tokens exceed the context window, the cached model needs to be evicted; otherwise, it will result in a significant decline in model performance. To optimize long-context LLM serving in mobile edge networks, it is essential to actively manage the cached LLMs and model inference requests, adapting to the state transitions of inference requests and cached models.

In this paper, we propose a joint model caching and inference offloading framework for long-context LLM serving in mobile edge networks, which allows LLMs to adopt various prompting techniques. In this framework, we formulate the problem of long-context LLM serving as a mixed-integer programming problem to minimize the system cost under the hardware resource constraints and context window constraints. To evaluate the performance of LLMs theoretically, we analyze a convergence rate of language ambiguity with respect to the reasoning paths and steps. To obtain optimal cache and inference management strategies, we propose a test-time deep reinforcement learning (T2DRL) algorithm~\cite{schulman2017proximal, sun2024learning} utilizing test-time training (TTT) models as the actor-critic network that can learn not only from historical experience during training but also from real-time interaction during testing. By treating the hidden state as a learnable model, the TTT model-based policy can be updated through self-supervised learning during test time, which overcomes limited expressive power in Long Short-Term Memory (LSTM) networks and the quadratic complexity of attention mechanisms. Based on the optimization results of T2DRL, we formulate the long-context LLM serving market and introduce a double Dutch auction (DDA) mechanism to further enhance the efficiency of resource allocation for maximizing social welfare. This auction dynamically matches the supply of edge server resources with the demand from user devices, ensuring fair and cost-effective distribution of computational resources. In this way, DDA can enhance long-context LLM serving by providing efficient resource allocation, dynamic pricing, and real-time adaptation capabilities that optimize computational resources across edge servers, particularly benefiting complex tasks like multi-path reasoning while maintaining low latency and maximizing social welfare. By leveraging buyer and seller clocks, the auction process adapts to the real-time needs of long-context LLM inference tasks, providing a scalable and transparent solution for dynamic environments. Finally, the experimental results demonstrate that the proposed algorithm can minimize system costs compared with existing baselines.

% We contribute a comprehensive solution comprising a unified framework, a theoretical analysis of its convergence properties, and an innovative offline reinforcement learning algorithm that minimizes overall system costs. Our experimental evaluations demonstrate that the proposed methods significantly reduce the costs associated with serving LLMs while maintaining high performance, thereby enabling practical and efficient LLM deployment on resource-limited ESs. The remainder of the paper is organized as follows: an exploration of related work, detailed methodology of our proposed framework, algorithmic strategies and their theoretical justification, followed by a presentation of our experimental results and concluding remarks on future research directions.

The main contributions can be summarized as follows.

\begin{itemize}
    \item For long-context LLM serving in mobile edge networks, we propose a joint model caching and inference offloading framework to determine resource allocation and deployment strategies for LLM agents in mobile edge networks, which can adapt to dynamic network conditions and device capabilities.
    \item In the framework, we analyze the convergence rate of LLMs handling contexts with CoT prompting. This approach enhances the reasoning and planning capability of LLMs by generating intermediate reasoning thoughts with additional latency and resources.
    \item Furthermore, we model a long-term optimization problem to minimize total system costs in terms of edge inference cost at ESs and the cloud inference cost of the cloud data center. We formulate this long-context LLM serving problem as a Markov decision process (MDP) where model caching and inference offloading decisions can affect the transition of states.
    \item We introduce a Double Dutch Auction mechanism to match supply and demand efficiently, ensuring fairness, transparency, and cost-effectiveness in the allocation of computational resources for serving long-context LLMs.
    \item To achieve proactive cached model and inference management, we propose a test-time deep reinforcement learning algorithm that learns optimal model caching strategies with a test-time training model-based actor-critic network. This algorithm can adjust to the dynamic requests and contexts during the testing phase, balancing performance with computational efficiency.
\end{itemize}

The remainder of this paper is organized as follows. Section II reviews related work on serving LLMs in mobile edge networks, caching strategies, and reinforcement learning approaches. Section III introduces the system model and problem formulation. Section IV details the proposed T2DRL algorithm and its optimization process. Section V discusses the market formulation and the DDA mechanism. Section VI presents experimental results and performance evaluations. Finally, Section VII concludes the paper and outlines future research directions.
% 第一段: 大背景

% 第二段: 问题

% 第三段: 挑战

% 第四段: 本文的

\section{Related Work}

\subsection{Large Language Model Serving in Mobile Edge Networks}

In mobile edge networks, there are three main objectives for serving LLMs including context, accuracy, and performance~\cite{zhao2022edgeadaptor, zeng2024cap}. Context is the ability of LLMs to handle longer contextual inputs effectively, which can improve their ability to understand and generate responses based on more extensive and potentially more complex histories during inference. The inference performance can be measured by inference speed, throughput (e.g., tokens per second), and cost efficiency (e.g., cost per token). To balance the accuracy and performance of LLM serving in 6G, Lin \textit{et al.} in \cite{lin2023pushing} propose a collaborative framework for efficient edge training and inference, leveraging advanced techniques like split learning and quantization to achieve low-latency inference despite limited edge computing resources. 
% To ensure efficient LLM serving in edge-cloud collaboration framework, Qian \textit{et al.} in \cite{qian2023user} propose the Dinkelbach algorithm, alternating optimization, semidefinite relaxation (SDR), the Hungarian method, and a fractional programming technique to reformulate optimization problems as Quadratically Constrained Quadratic Programming (QCQP), making them solvable by SDR and the Hungarian algorithm. 
Furthermore, a personalized inference scheduling framework is proposed in \cite{yang2024perllm}, which formulates LLM serving as a combinatorial multi-armed bandit problem and proposes a constraint satisfaction upper confidence bound (UCB) algorithm for inference scheduling and resource allocation. To preserve data privacy during LLM serving, Su \textit{et al.} in \cite{su2024titanic} present a novel distributed training paradigm designed specifically for fine-tuning pre-trained LLMs in a privacy-preserving manner directly on client devices. From the perspective of resource utilization efficiency, Ye \textit{et al.} in \cite{ye2024galaxy} propose Galaxy by incorporating a heterogeneity and memory-budget aware workload planning algorithm and Yuan \textit{et al.} in \cite{yuan2024generative} use the UCB algorithm to determine energy-efficient configurations.

\subsection{Service Caching and Offloading in Mobile Edge Networks}

Service caching and offloading allow for the full utilization of heterogeneous edge resource capacities for executing computation-intensive services in mobile edge networks. To minimize service provisioning delay, Zhou \textit{et al.} in \cite{zhou2022two} leverage the Lyapunov optimization approach and dependent rounding technique to determine caching and offloading decisions under convergence constraints. Considering cooperative service caching and workload scheduling, Ma \textit{et al.} in \cite{ma2020cooperative} propose an iterative algorithm based on Gibbs sampling and a heuristic workload scheduling algorithm with polynomial complexity. To enhance the efficiency of Federated Learning (FL), a framework of cache-based FL model training is developed in \cite{liu2023cache} for reducing per-iteration training time. Furthermore, edge caching for model downloading is investigated in \cite{qu2024trimcaching}. During LLM serving, the concept of context caching is proposed in \cite{hu2024memserve} that adopts a paging-based dynamic memory management system and explores efficient Key-Value (KV) cache transfer methods.
% \cite{singh2024llm}

\subsection{Deep Reinforcement Learning for Edge Caching}

DRL can learn proactive caching decisions to enhance cache hit rates by allowing the system to predict future content requests and cache popular content in dynamic edge systems~\cite{wang2022a3c}. To handle the temporal-spatial dynamic cache popularity, LSTM networks, and attention mechanisms are leveraged to improve the performance of DRL. For instance, LSTM-based DRL can dynamically adapt to changing user behaviors and environmental conditions, making them more effective than static caching strategies~\cite{narayanan2018deepcache, xie2024deep}. Moreover, attention-based DRL can capture the influence of adjacent ESs on the current node, which helps in making more informed caching decisions, which leads to improved caching efficiency and reduced access to the core network~\cite{zhao2021neighboring, yao2023cooperative}. Considering the increasing popularity and demand for LLM applications on mobile devices, He {et al.} in \cite{he2024large} propose an active inference approach for offloading and resource allocation of LLM inference tasks in the edge-cloud collaboration framework, which dynamically adapts to varying task loads and resource constraints in mobile edge networks. Nevertheless, during long-context inference, LSTM has limited expressive power in their hidden states while attention mechanisms have a quadratic complexity~\cite{sun2024learning}. Therefore, in this paper, we proposed a novel T2DRL algorithm with linear complexity and an expressive hidden state to handle long contexts efficiently.

\section{System Model}

In mobile edge networks, users can access LLM agents to answer questions or handle complex tasks autonomously for them, each LLM agent is empowered by one or multiple LLMs. Every user group engages a single LLM agent as their active assistant based on their specific needs and tasks. The network comprises $N+1$ LLM agent providers including $N$ ESs equipped with computing units and a centralized cloud data center. The set of LLM agent providers is denoted by $\mathcal{N} = \{0, 1, \ldots, N\}$ and the cloud data center by $0$. These ESs can execute LLM agents, while additional inference requests can be offloaded to the cloud data center. The set of available LLM agents is represented as $\mathcal{I}=\{1,2,\ldots, I\}$, which rely on one or multiple LLMs whose set is denoted by $\mathcal{M}=\{1, \ldots, M\}$. Given the multi-functionality of LLMs in providing various LLM agents, it is typically observed that the number of LLM agents, $I$, is significantly more than the number of LLMs, i.e., $I \gg M$. For each user group $\mathcal{U}_n$ associated with ES $n$, the demand for LLM agents is represented by $R_{n}^t = \{R_{n, i, m}^t | i \in \mathcal{I}, m \in \mathcal{M}\}$, which presents the number of requests for LLM $m$ to perform specific tasks. The input data size for LLM agent $i$ is denoted by $d_i$, and the configuration of each LLM $m$ includes model size $s_m$, computation required per token $e_m$, and context window size $w_m$. The context window size is a unique characteristic of LLMs, which refers to the maximum number of tokens that an LLM can effectively process and utilize in long-term inference tasks.

\subsection{Network Model}

% To facilitate interaction with LLM agents, each user group covered by a specific LLM service provider, denoted by $\mathcal{U}_n$, can access services directly through their respective ES $n$. This arrangement leverages the existing terrestrial communication infrastructures to manage data transmission. Users within the same LLM service provider share spectral resources, leading to potential mutual interference among them. The channel power gain from mobile user $u \in \mathcal{U}_n$ to their ES $n$ incorporates factors such as large-scale fading and Rayleigh fading, typical of terrestrial wireless communications, and is denoted by $g_{u, n}$. The bandwidth allocated to each ES $n$ is represented as $B_n$. Consequently, the uplink transmission rate for a user $u \in \mathcal{U}_n$, transmitting input data for LLM agents to the LLM service provider, can be expressed by the following equation:
% \begin{equation}
% r_{u,n} = B_n \log_2 \left(1 + \frac{g_{u, n}p_u}{\sum_{j \in \mathcal{U}_n \backslash \{u\}} g_{j,n} p_j + \sigma^2}\right),
% \end{equation}
% where $p_u$ represents the transmit power of user $u$, and $\sigma^2$ is the noise power of the additive white Gaussian noise (AWGN).  Each ES is connected to cloud data center through the terrestrial core network, which supports the transmission of data at a fixed rate $r^C_n$ for ES $n$. This configuration ensures robust and efficient data flows between mobile users and cloud services, facilitating high-performance access to LLM agents without intervention.

To optimize the serving of LLMs in mobile edge networks, we propose a joint model caching and inference framework that enables ESs to deploy LLMs for local execution and offload inference requests to cloud data centers for remote execution. In the framework, ESs can determine the cached models and offloaded requests based on their local caching and offloading strategies. Here, the binary variable $a_{n, i, m}^t \in \{0,1\}$ specifies if LLM $m$ for agent $i$ is cached at ES $n$ during time slot $t$, and the continuous variable $b_{n, i, m}^t \in [0,1]$ indicates the fraction of inference requests $i$ for LLM $m$ being executed at ES $n$ at that time slot. The set $\mathbf{a}^t_n = \{a^t_{n,1,1}, \ldots, a^t_{n,I,M}\}$ encapsulates the model caching decisions at ES $n$. Similarly, the inference offloading decision for ES $n$ is captured by $\mathbf{b}^t_n = \{b^t_{n,1,1}, \ldots, b^t_{n, I, M}\}$. 

When ESs are constrained by local resources, user-requested LLM agents can be executed locally if the necessary models are already cached in the GPUs. Let $G_n$ represent the GPU memory capacity at ES $n$. The decisions regarding which models to cache, denoted as $\mathbf{a}^t_n$, must adhere to the GPU memory limits for each time $t$ for every ES, described by:
\begin{equation}
\sum_{i \in \mathcal{I}} \sum_{m \in \mathcal{M}} a_{n, i, m}^t s_{m} \leq G_n.
\end{equation}
This constraint imposes that ESs cannot accommodate all LLMs on their GPUs simultaneously. When the LLMs are cached in GPUs, the LLM agents can be served by the ESs. The local inference execution of LLMs at ES $n$ is further limited by
\begin{equation}
b_{n, i, m}^t \mathbb{I}(R_{n, i, m}^t > 0) \leq a_{n, i, m}^t, \forall i \in \mathcal{I}, m \in \mathcal{M},
\end{equation}
where $\mathbb{I}(\cdot)$ is the identity function, and $\mathbb{I}(R_{n, i, m}^t > 0)$ indicates that there are active requests for agent $i$ for LLM $m$ at ES $n$.

Furthermore, the computing resources for executing LLMs are limited by the energy capacity $E_n$ at ES $n$, which can be represented as
\begin{equation}
\sum_{i \in \mathcal{I}} \sum_{m \in \mathcal{M}} e_m a_{n, i, m}^t (1 - b_{n, i, m}^t) R_{n, i, m}^t \leq E_n.
\end{equation}
In contrast, cloud data centers are generally assumed to have no significant constraints on GPU memory or computing capacity for executing LLMs~\cite{zhao2022edgeadaptor}.

\subsection{CoT Inference Model}

The LLMs utilize vast datasets and complex neural network architectures (like transformers) to model the joint probability distribution of sequences of text. However, rather than modeling the complete joint distribution directly (which is computationally infeasible due to its complexity and dimensionality), they approximate the marginal distribution. This is the distribution of individual tokens or phrases given their preceding context in a sequence.

\begin{definition}[LLMs as marginal approximations \cite{Yun2020Are}]
LLMs can be considered as a universal density approximator specifically for the domain of natural language. Let $d_{i,0}$ be the task description prompt and $ D_i = \{d_{i,0}, d_{i,1}, \ldots, d_{i,k}\} $ be the sequence of messages in LLM agent $i$. The learned distribution of LLM $m$ can be represented as $ p_m $ and the true distribution of the sequences of messages is represented as $\hat{q}$.
For all sequences of messages $ D_i = \{d_{i,0}, d_{i,1}, \ldots, d_{i,k}\} $ of at most $ w_m $ tokens, which is limited by the context window, the learned distribution $ p_m(D_i) $ follows the general product rule of probability
        $p_m(D_i) = p_m(d_{i,0}, d_{i,1}, \ldots, d_{i,k}) = p_m(d_{i,0}) p_m(d_{i,1} | d_{i,0}) \cdots p_m(d_{i,k} | d_{i,0}, d_{i,1}, \ldots, d_{i,k-1})$,
which is a good approximation of the true distribution $ \hat{q}(D_i) $, i.e., $p_m(D_i) \approx \hat{q}(D_i)$.
% The model is "universal" in that for any given true probability distribution $ q(x) $ of language data (the marginal distribution of natural language), and for any non-zero level of precision $ \epsilon $, there exists a configuration of parameters $ \Lambda $ such that the absolute difference between the true distribution $ q(x) $ and the model's approximation $ p_{\Lambda}(x) $ is less than $ \epsilon $ for all $ x $, which can be expressed as:
% \begin{equation}
% \forall q(x), \forall \epsilon > 0, \exists \Lambda : |q(x) - p_{\Lambda}(x)| \leq \epsilon \ (\forall x),
% \end{equation}
% where $ \Lambda $ represents the set of all model parameters.
\end{definition}

Finally, LLM $m$ generates an answer by recursively predicting the sequence of the next tokens from the learned distribution $ p_m $, conditioned on the concatenation of the task description prompt $d$ and the tokens sampled so far. The approximation leverages the product rule of probability to decompose the joint probability of the sequence $ D_i $ into the product of conditional probabilities of each example $ d_{i,j} $ given the previous examples $ \{d_{i,0}, d_{i,1}, \ldots, d_{i,j-1}\} $.

In CoT reasoning, the CoT example $E_i$ generated from an LLM is generated based on a true context $c^*$ and a true intention $\theta^*$~\cite{tutunov2023can}. These elements represent the underlying conditions and objectives that guide the generation of the text or response by LLMs. The ambiguity of a chain of thought example is quantified by how likely it is that the true context and intention can be accurately inferred from the example. This likelihood is termed as the complement of the ambiguity.

\begin{definition}[$\epsilon$-Ambiguity~\cite{jiang2023latent}]
The ambiguity of the chain, given the context and intention, is defined as 
$\hat{q}(c^*, \theta^* | E_i) = 1 - \epsilon(E_i).$
This equation states that the likelihood $ \hat{q} $ of correctly identifying the context $ c^* $ and intention $ \theta^* $ from the example $ E_i $ is $ 1 - \epsilon(E_i) $, where $ \epsilon(E_i) $ represents the measure of ambiguity. The smaller the value of $ \epsilon(E_i) $, the less ambiguous the example is, meaning the context and intention are more clearly communicated or inferred from $ E_i $.
\end{definition}
Specifically, a lower $\epsilon(E_i)$ value indicates higher clarity and lower ambiguity, making it easier to understand the intended message or instruction embedded in $ E_i $. Conversely, a higher $\epsilon(E_i)$ indicates that the example is more ambiguous, and the true context and intention are harder to discern.
To enhance the relevance and coherence of LLM agents through SC-CoT prompting, LLMs utilize a multi-path reasoning approach, where the final response is determined by reaching a consensus among various reasoning paths. This SC-CoT method samples multiple reasoning trajectories and chooses the most consistent outcome, thereby improving the robustness of the inference process. When LLMs perform inference, provided with a task description prompt $ d_0 $, LLM $m$ can generate $ J_{i,m} $ reasoning paths for agent $i$ by recursively predicting subsequent tokens from the learned distribution $ p_m $, based on the concatenation of $ d $ and the previously sampled tokens.
    
\begin{definition}[SC-CoT Reasoning~\cite{wang2022self}]
Each reasoning path $ \rho_j $ results in a candidate output $ o_j $, representing the final answer derived from the reasoning path. Thus, we obtain $ J_{i,m} $ candidate outputs $ \{A_1, \ldots, A_J\}$ required by LLM $m$ of agent $i$. For every reasoning path $ j $, LLM $ m $ receives an SC-CoT example $ E_i $ of varying lengths $ c_{j,i} $. This example $ E_i = \{e_{i,0}, \ldots, e_{i,k}\} $ consists of multiple thoughts, where each thought $ e_{i,k} $ is made up of $ k_i $ tokens, representing an individual step in the reasoning process. The marginal probability of any output $o$ is computed by integrating all the different reasoning paths as $P(o|d) = \sum_{j=1}^{J_{i,m}} P(\rho_j, o|d)$.

By considering all the reasoning paths, the most consistent answer $ o^* $ is the one that maximizes the marginal probability $o^*= \arg\max_{o} \sum_{j=1}^{J_{i,m}} P(\rho_j, o | d) = \arg\max_{o} \sum_{j=1}^{J_{i,m}} \mathbb{I}(o_j = o)$, where $\mathbb{I}(\cdot)$ is the identity function.
When each reasoning path is equally likely, this can be simplified to finding the most frequent answer among the candidate outputs.
\end{definition}

To derive the expected difference bound of language ambiguity, we first make the following assumptions~\cite{Yun2020Are, tutunov2023can, jiang2023latent}:
\begin{enumerate}
    \item It is a uniform distribution for the true contexts $c^*$.
    \item The example $E_i$ generated from $ \theta^* $ with  the true context $ c^*$ is bounded by the ambiguity measure, i.e., $ \epsilon(E_i) = \hat{q}(c^*, \theta^*|E_i) \leq \sigma $, where $ \sigma \in [0, \frac{1}{2}] $.
    \item For examples $ E_i$ derived from true intentions $\theta^*$ with the true context $ c^*$, the related ambiguity measure $ \epsilon(E_i)$ diminishes as the length of inference sequences increases, i.e., $ \lim_{l \to \infty} \epsilon(E_i) = 0 $.
\end{enumerate}

% \begin{assumption}
%     The prior probability distribution of true contexts $c^*$ is assumed to be uniform across all possible contexts.
% \end{assumption}

These assumptions posit that each possible context $ c^* $ from which language examples are generated is equally likely to occur before considering any specific examples or data. In probabilistic terms, a uniform prior distribution means that no contextual scenario is inherently more probable than any other before any additional information is taken into account. Therefore, they can make the skewness parameter $ \gamma_n(c^*) = \sup_{c\in C}\frac{\hat{q}(c^\star)}{\hat{q}(c_n)}$ is equal to 1 for context $c_n$ belonged to ES $n$, which helps in estimating the difference in ambiguity measurements between learned distribution $p_m$ and true distribution $\hat{q}$.

\subsection{Performance Convergence Analysis for CoT Prompting}

SC-CoT prompting can improve the capability of LLMs to handle complex reasoning, by merging multiple reasoning paths with a consensus mechanism, which ensures more reliable responses. In mobile edge networks, where latency and resources are restricted, SC-CoT is crucial to let LLMs explore various thought paths, enhancing their ability to refine responses based on broader context, thus boosting accuracy and decision-making. This approach mimics human-like problem-solving processes by breaking down complex tasks into simpler, sequential steps. CoT reasoning is particularly effective in enhancing the model's ability to tackle complex reasoning, arithmetic, commonsense, and symbolic reasoning tasks by providing a transparent and step-wise articulation of thought processes.

\begin{lemma}
%     Given a set of CoT examples $ E_i $ of varying lengths $ c_i $, generated from the intention $ \theta^* $ with the optimal context $ c^* $ derived from the distribution $ q_m(c) $ under Assumption 1, and considering $ d_{i,0} $ as the initial input message or task sampled from $ q(\theta^*) $, then for any sequence of messages $ D_i $, the likelihood ratio of the model’s output distribution $ p_m(D_i|d_{i,0}, E_i) $ to the   truth distribution $ q(D_i|d_{i,0}, c^*) $ is bounded by
% \begin{equation}
%     \left|\frac{p_m(D_i|d_{i,0}, E_i) - q(D_i|d_{i,0}, c^*)}{q(D_i|d_{i,0}, c^*)}\right| \leq \eta \prod_{y=1}^{c_i} \frac{\epsilon(E_{i,y})}{1 - \epsilon(E_{i,y})},
% \end{equation}
% where $ \eta = \frac{2\epsilon(d_{i,0})}{1-\epsilon(d_{i,0})} $ is a factor that depends on the ambiguity of the initial input $ d_{i,0} $.
% Considering a collection of $ c_i $ varying-length CoT examples, which are generated from the intention $ \theta^* $ with the optimal context $ c^* $ sampled from $ q_m(c) $ that satisfies Assumption 1. Furthermore, let $ d_{i,0} $ be the input message or task sampled from $ q(\cdot|\theta^*_0) $, which is generated from $ \theta^*_0 $ sampled from $ q_m(\cdot|c^*) $. Then, for any sequence of messages $ D_i $, we have:
% \begin{equation}
% |p_m(D_i|d_{i,0}, E_i) - \hat{q}(D_i|d_{i,0}, c^*)| \leq \eta \prod_{y=1}^{c_i} \frac{\epsilon(E_i,y)}{1 - \epsilon(E_i,y)},
% \end{equation}
% where $ \eta = 2 \frac{\epsilon(d_{i,0})}{1 - \epsilon(d_{i,0})} $ depends on the ambiguity of the input.
Given a set of examples $ c_i $ with different lengths, created from the true intention $ \theta^* $ and the true context $ c^* $ drawn from $q_m(c)$, let $ d_{i,0} $ be the initial message or task, which is derived from $ q(\cdot|\theta^*_0) $ and generated from $ \theta^0 $ sampled from $ q_m(\cdot|c^*) $. For any sequence of messages $D_i$, the following holds:
\begin{equation}
|p_m(D_i|d_{i,0}, E_i) - \hat{q}(D_i|d_{i,0}, c^*)| \leq \eta \prod_{y=1}^{c_i} \frac{\epsilon(E_i,y)}{1 - \epsilon(E_i,y)},
\end{equation}
where $ \eta = 2 \frac{\epsilon(d_{i,0})}{1 - \epsilon(d_{i,0})} $ is related to the ambiguity of the input.
\end{lemma}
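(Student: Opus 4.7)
The plan is to bound the target difference by exploiting a latent-variable decomposition over the hidden context $c$ and intention $\theta$, then use the $\epsilon$-ambiguity definition to control the concentration of the posterior around $(c^*, \theta^*)$. Specifically, I would first write
\begin{equation}
p_m(D_i|d_{i,0}, E_i) = \sum_{c, \theta} p_m(D_i | d_{i,0}, c, \theta)\, p_m(c, \theta | d_{i,0}, E_i),
\end{equation}
and similarly expand $\hat{q}(D_i | d_{i,0}, c^*)$ using the true posterior. By Definition~1, the learned marginal $p_m$ is a good approximation to $\hat{q}$ on any fixed sequence, so the bulk of the discrepancy reduces to controlling how far the posterior $p_m(c, \theta | d_{i,0}, E_i)$ lies from a point mass at $(c^*, \theta^*)$. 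Applying the triangle inequality to split this difference into a ``model'' part and a ``true'' part is what generates the factor of $2$ appearing in $\eta$.

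Next, I would process the conditioning information token-by-token via Bayes' rule. The initial task prompt $d_{i,0}$ contributes the prefactor $\epsilon(d_{i,0}) / (1 - \epsilon(d_{i,0}))$ since Definition~2 gives $\hat{q}(c^*, \theta^* | d_{i,0}) = 1 - \epsilon(d_{i,0})$, so the residual posterior mass outside $(c^*, \theta^*)$ is at most $\epsilon(d_{i,0})$, and the normalization by $1 - \epsilon(d_{i,0})$ arises when we re-express the conditional in terms of the joint. Combined with the factor $2$ from the triangle inequality split, this yields $\eta$. I would then induct on the token index $y = 1, \ldots, c_i$ of the chain $E_i$: at each step, the Bayesian update multiplies the mis-specification mass by $\epsilon(E_i, y) / (1 - \epsilon(E_i, y))$, where $\epsilon(E_i, y)$ is the ambiguity after observing the first $y$ tokens. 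Telescoping this inductive inequality produces the claimed product $\prod_{y=1}^{c_i} \epsilon(E_i, y) / (1 - \epsilon(E_i, y))$.

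The hardest part of the argument will be the inductive step itself: carefully verifying that the posterior update at each position genuinely contracts by exactly $\epsilon/(1-\epsilon)$ rather than picking up an extra prior-dependent factor. This is where Assumption~1 (uniform prior over $c^*$) and the resulting skewness parameter $\gamma_n(c^*) = 1$ must be invoked, since any non-uniformity in the prior would multiply each inductive step by $\gamma_n(c^*) > 1$ and destroy the clean product bound. A secondary subtlety is that $p_m$ and $\hat{q}$ must be kept distinct throughout the induction — the approximation property of Definition~1 is used only on the likelihood $p_m(D_i | d_{i,0}, c, \theta)$, while the $\epsilon$-ambiguity bound of Definition~2 is applied to the posteriors; conflating these two would either break the telescoping or double-count the error. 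Once both subtleties are handled, the final bound follows by collecting the prefactor $\eta$ from the $d_{i,0}$ step and the product from the $c_i$ token-level steps.
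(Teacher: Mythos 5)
You should first note that the paper does not prove this lemma in-line --- it defers the proof to \cite{xu2024cached} --- but the proof it gives for Theorem~1, which generalizes the lemma to multiple reasoning paths, exposes the intended argument, and comparing against it reveals a genuine gap in your plan: the provenance of the factor $2$ in $\eta$. You derive it from a triangle-inequality split of $|p_m(D_i|d_{i,0},E_i) - \hat{q}(D_i|d_{i,0},c^*)|$ into a ``model'' part $|p_m - \hat{q}|$ and a ``true'' part $|\hat{q}(\cdot\,|d_{i,0},E_i) - \hat{q}(\cdot\,|d_{i,0},c^*)|$. This cannot work: Definition~1 only asserts $p_m(D_i)\approx\hat{q}(D_i)$ qualitatively, with no rate, and nothing ties the model-approximation error to $\epsilon(d_{i,0})$ or to $\prod_y \epsilon(E_i,y)/(1-\epsilon(E_i,y))$ --- those quantities measure linguistic ambiguity, not how well the LLM fit the corpus. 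So the ``model'' half of your split is not bounded by the claimed right-hand side, and your factor $2$ evaporates. In the paper's argument the approximation $p_m\approx\hat{q}$ is used only once, to \emph{identify} $p_m(D_i|d_{i,0},E_i)$ with the ratio of true joints marginalized over the latent context, $\hat{q}(D_i,E_i)/\hat{q}(d_{i,0},E_i) = (\hat{q}(\cdot,c^*)+\Lambda)/(1+\Upsilon)$, where $\Lambda=\sum_{c\neq c^*}\hat{q}(D_i,E_i,c)/\hat{q}(d_{i,0},E_i,c^*)$ and $\Upsilon=\sum_{c\neq c^*}\hat{q}(d_{i,0},E_i,c)/\hat{q}(d_{i,0},E_i,c^*)$. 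The entire error is then the residual posterior mass on wrong contexts, and the factor $2$ arises because \emph{both} the numerator perturbation $\Lambda$ and the denominator perturbation $\Upsilon$ contribute one copy of the common bound $\gamma_n^{c_i}(c^*)\,\frac{\epsilon(d_{i,0})}{1-\epsilon(d_{i,0})}\prod_{y=1}^{c_i}\frac{\epsilon(E_i,y)}{1-\epsilon(E_i,y)}$.

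The remainder of your sketch is closer to the mark. Your token-by-token ``Bayesian contraction'' is, in substance, the same mechanism as the bound on $\Lambda$ and $\Upsilon$: each thought in the chain contributes a likelihood-ratio factor $\epsilon(E_i,y)/(1-\epsilon(E_i,y))$ comparing wrong contexts to $c^*$, and the product telescopes. Your observation that Assumption~1 is needed to force the skewness $\gamma_n(c^*)=1$, lest each step pick up an extra prior-dependent factor, is exactly right and matches the paper. To repair the proof, drop the triangle-inequality split, adopt the ratio decomposition above, and extract the $2$ from the elementary estimate $\bigl|\frac{a+\Lambda}{1+\Upsilon}-a\bigr|\le \Lambda + a\,\Upsilon \le \Lambda+\Upsilon$ for $a\in[0,1]$ and $\Lambda,\Upsilon\ge 0$.
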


The proof of Lemma 1 can be found in~\cite{xu2024cached} and this lemma essentially supports the concept of CoT by establishing a mathematical relationship that quantifies how the ambiguity in input influences the accuracy of the CoT reasoning process. The product term $\prod_{y=1}^{c_i} \frac{\epsilon(E_{i,y})}{1 - \epsilon(E_{i,y})}$ reflects the cumulative effect of errors in each step of reasoning on the overall reasoning process, emphasizing the importance of minimizing error at each reasoning step to ensure the overall reliability of the CoT reasoning process. This supports the practical implementation of CoT reasoning by highlighting the need for precise and less ambiguous reasoning steps in complex problem-solving tasks.

Furthermore, SC-CoT~\cite{wang2022self} reasoning enhances the typical chain-of-thought reasoning process by generating multiple reasoning paths and selecting the most consistent answer from those paths. This approach is based on the idea that when multiple reasoning paths converge on the same answer, it increases confidence in the correctness of that answer. This method is particularly useful for complex reasoning tasks, where a single reasoning path might be prone to errors or bias. In SC-CoT, each reasoning path $ j $ within a CoT example $ i $ contributes independently to the formulation of the output. This formulation allows for the incorporation of diversity in reasoning strategies and the aggregation of these paths to determine the most consistent and probable outcomes.

To effectively represent the features of multiple reasoning paths in SC-CoT reasoning within a theoretical framework, we extend lemma 1 to explicitly incorporate these features and their impact on the overall reasoning process as follows.

\begin{theorem}
Given a set of $J_{i,m}$ SC-CoT reasoning paths sampled as $ \{R_1, \ldots, R_j, \ldots, R_{J_{i,m}}\}$ with varying lengths, the final output is sampled from making consensus to obtain the most frequent answer under the true intention $ \theta^* $ with the true context $c^*$ drawn from $ q_m(c)$, let $ d_{i,0} $ be the initial input message or task, which is sampled from $ q(\cdot|\theta^*_0) $ and generated from $ \theta^*_0 $ drawn from $ q_m(\cdot|c^*) $. For any sequence of messages $ D_j $, and with $J_{i,m}$ reasoning paths sampled as $ \{R_1, \ldots, R_j, \ldots, R_{J_{i,m}}\} $, the following holds:
\begin{equation}
|p_m(D_{i,j}|d_{i,0}, E_{i,j}) - \hat{q}(D_{i,j}|d_{i,0}, c^*)| \leq \eta \prod_{j=1}^{J_{i,m}} \frac{\epsilon(E_{i,j})}{1 - \epsilon(E_{i,j})},
\end{equation}
where $ \eta = 2 \zeta_i \frac{\epsilon(d_{i,0})}{1 - \epsilon(d_{i,0})} $, which is dependent on the ambiguity of the input.
\end{theorem}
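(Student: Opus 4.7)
The plan is to extend Lemma 1, which bounds the deviation between the learned and true conditional distributions along a single reasoning chain, to the SC-CoT setting by treating each of the $J_{i,m}$ reasoning paths as an independent Lemma-1 instance and then combining them through the consensus marginalization $P(o|d) = \sum_{j=1}^{J_{i,m}} P(\rho_j, o|d)$. Because the paths are sampled independently from $p_m$ under the same initial prompt $d_{i,0}$ and latent context $c^*$, their per-path ambiguity contributions aggregate multiplicatively rather than additively, which is exactly the product structure over $j$ appearing in the target inequality.

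First, I would invoke Lemma 1 for each reasoning path $R_j$ separately, producing for every $j \in \{1, \ldots, J_{i,m}\}$ a bound of the form $|p_m(D_{i,j}|d_{i,0}, E_{i,j}) - \hat{q}(D_{i,j}|d_{i,0}, c^*)| \leq 2 \frac{\epsilon(d_{i,0})}{1-\epsilon(d_{i,0})} \prod_{y=1}^{c_{i,j}} \frac{\epsilon(E_{i,j,y})}{1-\epsilon(E_{i,j,y})}$. Next, I would absorb the intra-chain product $\prod_{y} \frac{\epsilon(E_{i,j,y})}{1-\epsilon(E_{i,j,y})}$ into the per-path ambiguity factor $\frac{\epsilon(E_{i,j})}{1-\epsilon(E_{i,j})}$, which is justified under Assumption 3 by telescoping the conditional ambiguities along each chain. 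I would then apply the consensus marginalization together with the triangle inequality: substituting the per-path bounds into the sum over $j$ and exploiting the uniform-prior assumption (so that $\gamma_n(c^*) = 1$) collapses the aggregate into the product $\prod_{j=1}^{J_{i,m}} \frac{\epsilon(E_{i,j})}{1-\epsilon(E_{i,j})}$ with a combinatorial constant that I would identify with $\zeta_i$.

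The hard part is the clean emergence of the product-over-paths structure from what is, in the raw consensus formula, an additive aggregation. The key trick will be to show that the SC-CoT selection of the most consistent answer $o^* = \arg\max_{o} \sum_j \mathbb{I}(o_j = o)$ effectively acts like a product filter on the individual path discrepancies: only paths whose per-path ambiguity contributes to the same consensus class survive, and the resulting cross-path dependence is what the constant $\zeta_i$ absorbs. A subsidiary technical step is to verify that the skewness parameter $\gamma_n(c^*)$ remains equal to $1$ across all paths under Assumption 1, so the change-of-measure between $p_m$ and $\hat{q}$ used in Lemma 1 lifts uniformly to the multi-path setting without introducing any additional path-dependent constants beyond $\zeta_i$. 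Finally, I would check that Assumption 3 carries over to the per-path ambiguities, guaranteeing that $\prod_{j=1}^{J_{i,m}} \frac{\epsilon(E_{i,j})}{1-\epsilon(E_{i,j})} \to 0$ as the reasoning sequences lengthen, so that the bound tightens monotonically with the number of sampled paths.
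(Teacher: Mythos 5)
There is a genuine gap in your proposal, and it sits exactly where you flag "the hard part." Your route is: apply Lemma~1 to each path $R_j$ separately, then combine the per-path bounds through the consensus marginalization $P(o|d)=\sum_{j}P(\rho_j,o|d)$ using the triangle inequality. But the triangle inequality over an additive aggregation can only ever produce a \emph{sum} of the per-path discrepancies, $\sum_{j}\frac{\epsilon(E_{i,j})}{1-\epsilon(E_{i,j})}$ (times per-path constants), not the \emph{product} $\prod_{j=1}^{J_{i,m}}\frac{\epsilon(E_{i,j})}{1-\epsilon(E_{i,j})}$ claimed in the theorem. Since Assumption~2 gives $\epsilon(E_{i,j})\le\sigma\le\frac12$, each factor is at most $1$, so the product is \emph{smaller} than any single term in your sum; the target bound is strictly stronger than anything the triangle-inequality route can deliver. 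Your proposed rescue --- that the $\arg\max$ consensus selection "acts like a product filter" and that $\zeta_i$ absorbs the cross-path dependence --- is an assertion, not an argument; no mechanism is given by which majority voting converts an additive error decomposition into a multiplicative one.

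The paper's proof avoids this entirely by never decomposing path-by-path with a triangle inequality. It writes the marginalized learned posterior directly as a ratio,
\begin{equation*}
p_m(D_{i,j}|d_{i,0},E_{i,j})=\frac{\sum_{j=1}^{J_{i,m}}\bigl(\hat{q}(D_{i,j}\setminus\{d_{i,0}\},E_{i,j},R_j,c^*)+\Lambda_j\bigr)}{1+\sum_{j=1}^{J_{i,m}}\Upsilon_j},
\end{equation*}
where $\Lambda_j$ and $\Upsilon_j$ collect the probability mass assigned to wrong contexts $c\neq c^*$, and then bounds $\Lambda_j,\Upsilon_j$ via the ambiguity measure and the skewness parameter $\gamma_n(c^*)$ (set to $1$ by Assumption~1). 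The product structure is extracted from this Bayesian ratio form, with $\zeta_i$ entering as the consensus factor in $\eta$ rather than as a combinatorial constant absorbing cross-path interactions. Your subsidiary steps (uniformity of $\gamma_n(c^*)=1$ across paths, and Assumption~3 driving the product to zero as lengths grow) are consistent with the paper, but without the ratio-based decomposition the central multiplicative bound does not follow from what you have written.
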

\begin{proof}
    Starting from $ p_m(D_{i,j}|d_{i,0}, E_{i,j}) $, we sample $ j $ reasoning paths and marginalize over these paths
\begin{equation}
p_m(D_{i,j}|d_{i,0},E_{i,j}) = \frac{\sum_{j=1}^{J_{i,m}}  \hat{q}(D_{i,j}, E_{i,j}, R_j)}{\sum_{j=1}^{J_{i,m}}  \hat{q}(d_{i,0}, E_{i,j}, R_j)},
\end{equation}
\begin{equation}
= \frac{ \sum_{j=1}^{J_{i,m}}  (\hat{q}(D_{i,j} \setminus \{d_{i,0}\}, E_{i,j}, R_j, c^*) + \Lambda_j)}{1 + \sum_{j=1}^{J_{i,m}}  \Upsilon_j},
\end{equation}
where $ \Lambda_j $ and $ \Upsilon_j $ are given by:
\begin{equation}
\Lambda_j = \sum_{c \neq c^*} \frac{\hat{q}(D_{i,j}, E_{i,j}, R_j, c)}{\hat{q}(d_{i,0}, E_{i,j}, R_j, c^*)}
\end{equation}
and
\begin{equation}
    \Upsilon_j = \sum_{c \neq c^*} \frac{\hat{q}(d_{i,0}, E_{i,j}, R_j, c)}{\hat{q}(d_{i,0}, E_{i,j}, R_j, c^*)}.
\end{equation}
Using the ambiguity measure for CoT examples, we establish the bounds on $ \Lambda_j $ and $ \Upsilon_j $ as
\begin{equation}
\Lambda_j, \Upsilon_j \leq \gamma^{c_{j,i}}_n (c^*) \frac{\epsilon(d_{i,0})}{1 - \epsilon(d_{i,0})} \prod_{y=1}^{c_{j,i}} \frac{\epsilon(E_{i,j})}{1 - \epsilon(E_{i,j})}
\end{equation}

Finally, by combining Eqs. (14)-(18), we can get
\begin{equation}
\begin{aligned}
    |p_m(D_{i,j}|&d_{i,0}, E_{i,j}) - \hat{q}(D_{i,j}|d_{i,0}, c^*)| \\&= |\frac{\sum_{j=1}^{J_{i,m}}  (\hat{q}(D_{i,j} \setminus \{d_{i,0}\}, E_{i,j}, R_j, c^*) + \Lambda_j)}{1 + \sum_{j=1}^{J_{i,m}}  \Upsilon_j} \\ & - \hat{q}(D_{i,j}|d_{i,0}, c^*)| \leq \eta \prod_{y=1}^{c_{j,i}} \frac{\epsilon(E_{i,j})}{1 - \epsilon(E_{i,j})}
\end{aligned}
\end{equation}
where $ \eta = 2 \zeta_i \frac{\gamma^{J_{i,m}}_n (c^*) \epsilon(d_{i,0})}{1 - \epsilon(d_{i,0})} $, following the first assumption, indicating $ \gamma_n(c^*) = 1 $ and thus $ \eta = 2 \zeta_i \frac{\epsilon(d_{i,0})}{1 - \epsilon(d_{i,0})} $.
\end{proof}

The theorem implies that the accuracy and robustness of the model's output are closely tied to the clarity and specificity of the context and intentions it is trained with. If the examples and initial inputs are unambiguous, the model's predictions will be closer to the true context, resulting in more accurate and reliable outputs. The outer product overall reasoning paths represent the compounded benefit of exploring multiple reasoning strategies. By aggregating across diverse paths, the method leverages redundancy and diversity to enhance the likelihood of converging on the correct answer, especially in complex reasoning scenarios where single-path strategies might fail. This theorem shows how the divergence from the true distribution is not only a function of the ambiguities of individual reasoning steps but also the initial conditions set by $ d_{i,0} $. Effective reduction in initial ambiguity and careful management of step-wise ambiguities are crucial for enhancing the accuracy and reliability of SC-CoT systems.

\begin{corollary}
When examining SC-CoT examples $ E_{i,j} $, for any chosen $ \sigma $ within the interval $ \left[0, \frac{1}{2}\right] $, a length threshold $ k^*_{i,\sigma} \in \mathbb{N} $ exists. For any length $ k_i $ that is greater than or equal to $ k^*_{i,\sigma} $, the following condition holds true: $\epsilon(E_{i,j}) \leq \sigma$.

\end{corollary}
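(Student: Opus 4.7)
The plan is to derive the corollary as an almost immediate consequence of the third modeling assumption listed just before the Performance Convergence Analysis subsection, namely that for any example $E_{i,j}$ derived from the true intention $\theta^*$ with the true context $c^*$, the ambiguity satisfies $\lim_{l \to \infty} \epsilon(E_{i,j}) = 0$ as the inference sequence length $l$ grows. Combined with the second assumption, which restricts the range of admissible $\sigma$ to $[0, 1/2]$, this limit property is exactly the hypothesis needed to produce a length threshold of the claimed kind.

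First I would fix an arbitrary $\sigma \in [0, 1/2]$ and recall that the total number of tokens in the CoT example $E_{i,j} = \{e_{i,0}, \ldots, e_{i,k}\}$ grows monotonically with $k_i$, since each thought $e_{i,k}$ contributes $k_i$ tokens. Then I would invoke the formal $\varepsilon$-$N$ definition of the limit applied to assumption 3: for the chosen $\sigma > 0$, there exists a positive integer $k^*_{i,\sigma}$ such that whenever the length parameter $k_i$ of the example exceeds $k^*_{i,\sigma}$, the ambiguity measure satisfies $\epsilon(E_{i,j}) \leq \sigma$. The threshold $k^*_{i,\sigma}$ depends on both the agent index $i$ (through the true context and intention distributions) and on $\sigma$, which is precisely the dependence stated in the corollary.

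Finally I would note the consistency with assumption 2: since $\epsilon(E_{i,j}) \leq \sigma \leq 1/2$ is compatible with the uniform bound $\epsilon(E_{i,j}) \leq \sigma$ assumed on admissible examples, no contradiction arises, and the threshold $k^*_{i,\sigma}$ is well defined for every $\sigma \in [0, 1/2]$. I would close by remarking that the corollary can be read as a quantitative companion to Theorem 1: once the example length exceeds $k^*_{i,\sigma}$, each factor $\epsilon(E_{i,j})/(1-\epsilon(E_{i,j}))$ appearing in the bound of Theorem 1 is controlled by $\sigma/(1-\sigma) \leq 1$, ensuring that the cumulative product over $J_{i,m}$ reasoning paths contracts rather than expands.

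I do not expect a significant obstacle here, since the statement is essentially a reformulation of the limit assumption using the standard $\varepsilon$-$N$ characterization; the only subtlety worth flagging is to make explicit that the threshold $k^*_{i,\sigma}$ is an example-length (token count) threshold rather than a reasoning-path count, so that it interacts correctly with the per-path factors in Theorem 1 rather than with the outer product over $j$.
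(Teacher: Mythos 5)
Your proposal is correct and follows essentially the same route as the paper's own proof: fix $\sigma \in \left[0, \tfrac{1}{2}\right]$, invoke Assumption 3's limit $\lim_{l \to \infty} \epsilon(E_{i,j}) = 0$, and extract the threshold $k^*_{i,\sigma}$ via the standard $\varepsilon$-$N$ characterization of the limit. The additional remarks you make about consistency with Assumption 2 and the interaction with the per-path factors in Theorem 1 go slightly beyond what the paper writes but do not change the argument.
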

\begin{proof}
    By selecting $ \sigma \in \left[0, \frac{1}{2}\right] $, SC-CoT examples $ E_{i,j} $ following Assumption 3 have the approximation that $\lim_{l \to \infty} \epsilon(E_{i,j}) = 0$.
Then, there exists $ k^*_{i,\sigma} \in \mathbb{N} $ such that for any $ k_i \geq k^*_{i,\sigma} $, the inequality $ \epsilon(E_{i,j}) \leq \sigma $ holds.
\end{proof}
This proposition supports the SC-CoT approach by providing a theoretical foundation for the reduction of ambiguity with sufficiently long CoT examples, thereby enhancing the reliability and accuracy of the model's outputs.

In SC-CoT, the number of thoughts $ K^t_{n,i,m} $ is accumulated during the inference process for cached LLM $m$ to serve agent $i$ at ES $n$.  Depending on the caching decision $a_{n, i, m}^t$ and offloading decision $b_{n, i, m}$, the batch of inference requests executed as ES $n$ can be calculated as $\delta_{n, i, m, j}^t = a_{n, i, m}^t (1-b_{n, i, m}^t) R_{n, i,m}^t J_{i,m} k_i$ for agent $i$ and LLM $m$, where $k_i$ is the size of each thought for agent $i$ estimated via Corollary 1. Therefore, the equation to reflect the aggregation of multiple sampled paths is given as
\begin{equation}
K^t_{n,i,m} = \begin{cases} 
0, & t = 0, \\
a^t_{n,i,m} \left( K^{t-1}_{n,i,m} + \delta^t_{n,i,m} \right), & \text{otherwise},
\end{cases}
\end{equation}
where $ \delta^t_{n,i,m} = \sum_{j=1}^{J_{i,m}} \delta^t_{n,i,m,j}$ is the total calculated tokens over $J$ reasoning paths.

Similarly, age of thoughts (AoT) $\kappa^t_{n,i,m}$ reflect the multiple sampled paths~\cite{xu2024cached} and their aggregated contributions as
\begin{equation}
\kappa^t_{n,i,m} = \begin{cases} 
0, & t = 0, \\
a^t_{n,i,m} \left\{\kappa^{t-1}_{n,i,m} + \zeta_i \delta^t_{n,i,m} - \Delta^t_{i,m} \right\}^+, & \text{otherwise,}
\end{cases}
\end{equation}
where $\zeta_i$ is the consensus factor of SC-CoT reasoning for LLM agent $i$ and $ \Delta^t_{i,m} $ is the vanishing factor of thoughts.

To reflect the SC-CoT process, we modify the performance Eq. (5) to include the most consistent answer selection mechanism as
\begin{equation}
A^t_{n,i,m} = \alpha_{i,m} \log \left( \frac{1}{\beta^{\kappa^t_{n,i,m}}} \right),
\end{equation}
where $ \alpha_{i,m} $ is the zero-shot accuracy of LLM $ m $ for LLM agent $ i $,  $ \beta $ is the reasoning gain factor reflecting the performance improvement due to self-consistency, and $ \kappa^t_{n,i,m} $ represents the adjusted AoT considering the most consistent answers selected from multiple reasoning paths.

\subsection{Cost Structure and Problem Formulation}

\subsubsection{Cost Structure}

Edge inference costs are incurred when LLM agents are processed at ESs. These costs include switching costs from loading and evicting models in GPU memory, transmission costs for sending input prompts and results between users and ESs, computation costs for processing data at ESs, and accuracy costs due to performance gaps from using alternative LLMs. First, the switching cost arises when switching models in the GPU memory at the ESs~\cite{zhao2022edgeadaptor}. It includes the latency and hardware wear associated with loading and evicting operations, i.e.,
\begin{equation}
l_{swi}^n(\textbf{a}^t_n) = \sum_{i \in \mathcal{I}} \sum_{m \in \mathcal{M}} \lambda \cdot  \mathbb{I}(a_{n,i,m}^t > a_{n,i,m}^{t-1}),
\end{equation}
where $ \lambda $ is the unit cost coefficient for loading model parameters, and $ \mathbb{I}(\cdot) $ is an indicator function.

The transmission cost involves transmitting input prompts and inference results between users and ESs, accounting for the increased data transmission due to multiple reasoning paths, which can be calculated as
\begin{equation}
l_{com}^n(\textbf{a}^t_n, \textbf{b}^t_n) = \sum_{i \in \mathcal{I}} \sum_{m \in \mathcal{M}} R_{n,i,m}^t  l_{n,i} d_i  (1-b_{n,i,m}^t),
\end{equation}
where $ l_{edge}^{n} $ is the unit transmission cost for edge network access, and $ d_i $ is the input data size.

The computation cost of executing inference requests at ES $n$, considering the multiple reasoning paths generated and evaluated by using SC-CoT prompting, i.e.,
\begin{equation}
l_{cmp}^n(\textbf{a}^t_n, \textbf{b}^t_n) = \sum_{i \in \mathcal{I}} \sum_{m \in \mathcal{M}} \sum_{j=1}^{J_{i,m}} \frac{\delta_{n,i,m, j}^t e_m}{f_n},
\end{equation}
where $ \delta_{n,i,m,j}^t $ is the total computation token for the $ j $-th reasoning path, $ e_m $ is the energy consumption per token, and $ f_n $ is the computing capacity of ES $n$.

The accuracy cost is incurred when there is a performance gap due to the use of alternative LLMs, considering the consensus mechanism from multiple reasoning paths as
\begin{equation}
l_{acc}^n(\textbf{a}^t_n, \textbf{b}^t_n) = \sum_{i \in \mathcal{I}} \sum_{m \in \mathcal{M}} \bar{A}_{n,i,m}^t R_{n,i,m}^t a_{n,i,m}^t (1 - b_{n,i,m}^t),
\end{equation}
where $\bar{A}_{n, i,m}^t = \frac{1-\alpha_{i,m}}{\kappa^t_{n,i,m} \log(1/\beta_{i, m})}$ is the accuracy cost of executing LLM $m$ at ES $n$ for agent $i$.

Overall, the total edge inference cost is the sum of these components
\begin{equation}
\begin{aligned}
    L_n^t(\textbf{a}^t_n, \textbf{b}^t_n) = l_{swi}^n(\textbf{a}^t_n) &+ l_{com}^n(\textbf{a}^t_n, \textbf{b}^t_n) \\&+ l_{cmp}^n(\textbf{a}^t_n, \textbf{b}^t_n) + l_{acc}^n(\textbf{a}^t_n, \textbf{b}^t_n).
\end{aligned}
% L_n^t(\textbf{a}^t_n, \textbf{b}^t_n) = l_{swi}^n(\textbf{a}^t_n) + l_{com}^n(\textbf{a}^t_n, \textbf{b}^t_n) + l_{cmp}^n(\textbf{a}^t_n, \textbf{b}^t_n) + l_{acc}^n(\textbf{a}^t_n, \textbf{b}^t_n).
\end{equation}

When ES $n$ cannot handle all requests, its uncompleted inference requests are offloaded to cloud data centers, incurring cloud inference costs. This includes the costs associated with offloading requests to the cloud and processing them remotely. The cloud inference cost is calculated based on the unit processing cost at the cloud data center and the size of the data processed. The cloud inference cost includes the cost of offloading and processing these requests, accounting for the additional computation and data transmission due to multiple reasoning paths, which can be calculated as
\begin{equation}
L_C^t(\textbf{a}^t_n, \textbf{b}^t_n) = \sum_{n \in N} \sum_{i \in \mathcal{I}} \sum_{m \in \mathcal{M}} \sum_{j=1}^{J_{i,m}} l_0^{m,j} b_{n,i,m}^t R_{n,i,m}^t,
\end{equation}
where $ l_0^{m,j} $ is the unit processing cost at the cloud data center for executing reasoning path $j$ of LLM $m$.

The total cost for provisioning LLM agents for an ES combines both edge and cloud inference costs over a specific period. By summing these costs over time, the total cost provides a comprehensive measure of the resources and quality of LLM agents involved in operating LLM agents, helping to ensure efficient resource utilization and cost minimization while maintaining high perception and reasoning performance. Therefore, the total cost for provisioning LLM agents at ES $ n $ is the sum of the edge inference cost and the cloud inference cost over a period $ T $ as

\begin{equation}
L_{total}^n = \frac{1}{T} \sum_{t = 1}^{T} \left(L_n^t(\textbf{a}^t_n, \textbf{b}^t_n) + L_C^t(\textbf{a}^t_n, \textbf{b}^t_n)\right).
\end{equation}
However, minimizing the total system cost under hardware resource constraints and context window constraints is NP-hard as the decision variables include discrete and continuous variables while the constraints are not linear.

\section{Test-time Reinforcement Learning-based Model Caching and Inference Offloading Algorithm}

To optimize long-context LLM serving in mobile edge networks proactively, we propose the T2DRL algorithm by leveraging TTT models in determining model caching and inference offloading decisions. Before deployment of the algorithm, the environment needs to be formulated as an MDP consisting of state, observation, action, and reward.
\subsection{Markov Decision Process}

\paragraph{Observation Space}
During the serving of LLM agents in the SC-CoT reasoning model, the state space captures the comprehensive status of the system, while the observation space is a subset of state space. Therefore, the observation $ O_n^t \in\mathcal{O}$ of ES $n$ at time $t$ is a combination of the user request information, cached model status, and model context utilization information, which can be formally defined as
$O_n^t := \left( \textbf{a}^t_n, R_n^t, \kappa_{n}^t \right)$,
where $ R_n^t$ represents the set of all user request information; $\textbf{a}_{n}^t$ represents the set of all cached model statuses, indicating whether a model is cached at a specific ES and the number of tokens utilized; and $\kappa_{n}^t = \{\kappa_{n,1,1}^t,\dots,\kappa_{n, I, M}^t\}$ represents the set of all model context utilization information, describing the current AoT for each cached LLM at ESs.

\paragraph{Action Space}
The action $X^t_n \in \mathcal{X}$ in action space for the SC-CoT framework encompasses both model caching decisions $ \textbf{a}^t_n $ and inference offloading decisions $\textbf{b}^t_n$, which can be defined as $X^t_n := \left( \textbf{a}^t_n, \textbf{b}^t_n\right)$.

% In the action space, model caching decisions $ \textbf{a}^t_n $ are binary variables indicating whether a specific model is cached at a particular ES, directly impacting the response time and efficiency of handling user requests. Meanwhile, inference offloading decisions $\textbf{b}^t_n$ are continuous variables that determine the proportion of user requests offloaded from an ES to a cloud data center, helping to balance the load and optimize resource utilization across the network.

\paragraph{Reward Function}
The reward function is designed to minimize the total system cost, which includes edge inference costs and cloud inference costs. The goal is to formulate a reward function that encourages actions leading to lower overall costs while maintaining high-quality LLM agents. To formulate the reward function, we aim to minimize the total system cost. The reward at time $ t $, denoted as $ r^t_n $, is defined as the negative of the total system cost, ensuring that lower costs yield higher rewards as $r^t_n(O_n^t, X_n^t) = -L_{total}^n(\textbf{a}^t_n, \textbf{b}^t_n)$, which can effectively guide the algorithm towards minimizing the total system costs, ensuring efficient and cost-effective serving of LLM agents in the joint model caching and inference offloading framework.
% \paragraph{Probability Transition Function}
% \paragraph{Value Function}
\subsection{Test-time Deep Reinforcement Learning}

\begin{figure}[t]
    \centering
    \includegraphics[width=1\linewidth]{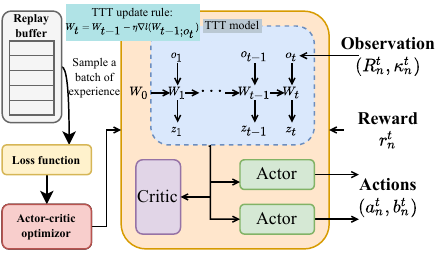}
    \caption{The T2DRL algorithm utilizes the test-time training (TTT) model in the actor-critic network.}
    \label{fig:algorithm}
\end{figure}

As shown in Fig.~\ref{fig:algorithm}, the T2DRL algorithm switches between exploration and exploitation by engaging with its surroundings and refining an alternative target function through random gradient ascent. The process initiates with the data collection phase, where the policy $\pi_n$ is executed in the environment to gather a set of trajectories  $(O_n^t, X_n^t, r^t_n, O_n^{t+1})$. Each step involves executing the current policy $\pi$ to determine the action and observing the next observations and rewards. The DRL algorithm leverages a batch of trajectories comprised of sequences of states, actions, and rewards in the replay buffer, which is used to calculate advantages for policy optimization.

At the beginning of each episode, the T2DRL policy $\pi_n$ consists of initialized weights $W_0$ for TTT models.  For each input observation  $O^t_{n}$ in the sequence, the policy $\pi$ predicts an intermediate states $ z^t_n  $ as  $ z^t_n = \pi_n(O^t_{n}; W_t) $. The weights  $W_t$ are then updated using a gradient descent step on a self-supervised loss  $\ell$, following  $ W_t = W_{t-1} - \eta \nabla \ell(W_{t-1}; O^t_{n})$, where the self-supervised loss can be calculated as $\ell(W; O^t_{n}) = \left \|\pi_n(\tilde{O}^t_{n}; W) - O^t_{n}\right \|^2$ and $\tilde{O}^t_{n}$ is the approximation of true state of environment. This update mechanism enables continuous learning during testing phases, allowing the model to adapt its parameters based on the immediate feedback.  This process is repeated for each token in the observation sequence.

Next, advantage estimates  $ \tilde{A}_t  $ are computed using Generalized Advantage Estimation, which effectively reduces the variance of the policy gradient estimates by leveraging the temporal difference residuals.  The advantage estimates $ \tilde{A}_t  $ is calculated as the sum of discounted temporal differences  $\delta_t$. 
i.e., $\tilde{A}_t = \delta_t + (\gamma \lambda) \delta_{t+1} + \cdots + (\gamma \lambda)^{T-t+1} \delta_{T-1}$, where  $ \delta_t = r_t + \gamma V(s_{t+1}) - V(s_t)  $ and the state value function  $ V_\theta(s_t) = \mathbb{E}_\pi \left[ \sum{t=0}^{\infty} \gamma^t r^t_n \mid O^0 = O \right]  $, which estimates the expected return from state  $ O^t  $ under the current policy  $\pi(\theta)$. 

Following self-supervised loss and advantage estimation, the TTT model-based policy can output the action $X_n^t = \pi(O_n^t; \theta; W_t)$ for the observed. The model updates its hidden state using the trajectory data and performs test-time training by updating its weights with a self-supervised loss function $\ell$. This adaptation process enables the model to tune its parameters specifically to the nuances of the encountered test data, enhancing its predictive accuracy and responsiveness. Furthermore, the policy of T2DRL is then updated by defining the PPO surrogate loss with the clipped objective.
% \begin{equation}
% \mathcal{L}_{\text{CLIP}}(\theta) = \hat{E}_t \left[ \min \left( r_t(\theta) \hat{A}_t, \text{clip}(r_t(\theta), 1 - \epsilon, 1 + \epsilon) \hat{A}_t \right) \right],
% \end{equation}
% where  $ r_t(\theta) = \frac{\pi_\theta(a_t | s_t)}{\pi_{\theta_{\text{old}}}(a_t | s_t)}  $ represents the probability ratio of the new policy to the old policy, ensuring that updates do not deviate excessively from previous policies, thereby maintaining training stability. Then, the combined objective function can be represented as
% \begin{equation}
% \mathcal{L}_{\text{CLIP+VF+S}}(\theta) = \hat{E}_t \left[ \mathcal{L}_{\text{CLIP}}(\theta) - c_1 \mathcal{L}_{\text{VF}}(\theta) + c_2 S_{\pi_\theta} \right],
% \end{equation}
% where  $ \mathcal{L}_{\text{VF}}(\theta) = (V_\theta(s_t) - V_t^{\text{target}})^2  $ and  $ S_{\pi_\theta}  $ is the entropy bonus, which encourages policy exploration by maximizing policy randomness. This comprehensive function is optimized through several epochs of minibatch optimization using stochastic gradient ascent to effectively balance policy performance, value accuracy, and exploration diversity.

Finally, the policy parameters  $ \theta$ and TTT model parameters $W$ are updated with the optimized values. This iterative process continues until convergence or for a fixed number of iterations. By utilizing TTT models in the actor-critic network, the T2DRL algorithm can adapt its learned policy and model parameters during test phases dynamically, enhancing the system's ability to efficiently manage model caching and inference offloading. This optimization leads to improved resource utilization and minimized operational costs while ensuring high-quality LLM agents, making the T2DRL framework a robust and adaptive solution for long-context LLM serving in mobile edge networks. % The detailed training procedures are outlined in Algorithm 1.

\section{Market Formulation and Double Dutch Auction Mechanism Design}

For serving long-context LLMs at mobile edge networks, the auction market is modeled as a dynamic interaction between resource-limited edge servers (sellers) and user devices (buyers). Formulating a long-context LLM serving market at mobile edge networks is essential to efficiently allocate limited edge resources, optimize costs, and handle dynamic user demands. It provides a structured framework for balancing trade-offs between computational efficiency, accuracy, and latency while ensuring fairness and transparency in resource sharing. By enabling real-time adaptation and incentivizing truthful participation, the market framework maximizes social welfare and scalability, supporting diverse applications in resource-constrained edge environments. In the long-context LLM serving market, there is an auctioneer who coordinates the market to optimize the allocation of resources and inference tasks.

\subsection{Long-context LLM Serving Market Formulation}

The long-context LLM serving market consists of three primary entities:
\begin{itemize}
    \item \textbf{Buyers (User Devices):} Each buyer $i$ submits a buy-bid, $k^t_i$, representing the maximum price they are willing to pay for LLM inference services. The valuation $v^b_i$ depends on the required quality $Q_i$ and resource demand $R_i$, given as:
    \begin{equation}
    v^b_i = Q_i l_{acc}^i,
    \end{equation}
    where $Q_i$ reflects the inference quality coefficient of buyer $i$.

    \item \textbf{Sellers (Edge Servers):} Each seller $j$ submits a sell-bid, $o^t_j$, indicating the minimum price they are willing to accept. This valuation $v^s_j$ is based on resource costs, including memory and computation, defined as:
    \begin{equation}
    v^s_j = {G_j} (l_{swi}^j + l_{com}^j + l_{cmp}^j),
    \end{equation}
    where $G_j$ is server $j$ resource utilization coefficient.

    \item \textbf{Auctioneer:} Operates the DDA mechanism, coordinating buyer and seller clocks to identify the market-clearing price, $p^*$.
\end{itemize}

\subsection{Auction Mechanism Design}

Double Dutch auction excels in addressing the computational demands of multi-path reasoning for long-context LLMs in mobile edge networks by dynamically matching supply and demand. Its iterative price adjustment mechanism ensures efficient resource allocation and cost optimization while accommodating diverse reasoning paths. Additionally, DDA's scalability and real-time adaptation capabilities allow it to handle fluctuating task complexities and participant heterogeneity, ensuring fairness and transparency in resource management. The auction process involves four key steps: clock broadcast, clock acceptance, clock adjustment, and market clearing, each carefully designed to ensure efficient resource distribution and optimal pricing.

1. \textbf{Clock Broadcast:}  In the clock broadcast phase, two synchronized clocks, i.e., the buyer clock $C^t_B$ and the seller clock $C^t_S$, are leveraged to set the pricing dynamics. 
   \begin{itemize}
       \item \textbf{Buyer Clock ($C^t_B$):} Starts at a high value, $p_{\text{max}}$, and decreases by a step size $\Theta^t$.
       \item \textbf{Seller Clock ($C^t_S$):} Starts at a low value, $p_{\text{min}}$, and increases by $\Theta^t$.
   \end{itemize}
   This iterative adjustment establishes a range for potential transaction prices, progressively inviting participation from both buyers and sellers. By optimizing resource allocation and pricing in real-time, the DDA minimizes delays in serving long-context LLMs. This is especially beneficial for applications requiring rapid multi-round interactions with users, such as conversational agents or real-time decision-making systems.

2. \textbf{Clock Acceptance:}
   The clock acceptance phase determines participant inclusion in the long-context LLM serving market. Buyers submit bids $k^t_i$ and join the winning set $\mathcal{M}^t_B$ if their bids meet or exceed the current buyer clock price $k^t_i \geq C^t_B$. Similarly, sellers submit offers $o^t_j$ and are added to the winning set $\mathcal{N}^t_S$ if their offers are less than or equal to the seller clock price $o^t_j \leq C^t_S$. This step ensures a dynamic and adaptive matching of supply and demand, reflecting participants' valuations in real time.

3. \textbf{Clock Adjustment:}
   If no new bids are received during a clock cycle, the auctioneer initiates the clock adjustment phase by further narrowing the price gap between buyer and seller clocks: $C^{t+1}_B = C^t_B - \Theta^t$ and $C^{t+1}_S = C^t_S + \Theta^t$. This systematic adjustment accelerates the auction process towards convergence, ensuring efficient market operation. The scalability of the DDA mechanism enables it to support a wide range of LLM tasks, from simpler single-path reasoning to more complex multi-path reasoning scenarios. By dynamically adjusting clock prices, the auction accommodates the needs of heterogeneous tasks and resource constraints.

4. \textbf{Market Clearing:}
   Market clearing occurs when the buyer and seller clocks intersect ($C^t_B < C^t_S$), signifying an equilibrium price acceptable to both parties. The final market-clearing price ($p^*$) is calculated as a weighted average of the final clock values: $p^* = \kappa C^{T-1}_B + (1-\kappa) C^{T-1}_S$, where $\kappa \in [0,1]$ determines the relative contribution of buyer and seller valuations. This ensures a fair and balanced transaction price that reflects market dynamics.

The utilities for buyers and sellers are derived from the clearing price. The utility $u_i$ of buyer $i$ is the surplus gained from transacting below their maximum willingness to pay: $u_i = k^t_i - p^*$. Conversely, a seller's utility ($\tilde{u}_j$) represents their profit from selling above their minimum acceptable price: $\tilde{u}_j = p^* - o^t_j$. Together, these individual utilities contribute to the total social welfare $SW$, which can be calculated as
\begin{equation}
SW = \sum_{i \in \mathcal{M}^t_B} u_i + \sum_{j \in \mathcal{N}^t_S} \tilde{u}_j.
\end{equation}
The social welfare captures the overall economic benefit generated by the auction. The DDA maximizes the overall economic benefit of the system by aligning the incentives of buyers and sellers with optimal resource allocation. The social welfare metric highlights the efficiency and effectiveness of the auction in generating net positive outcomes for all participants.

\subsection{Property Analysis}

The DDA is a robust mechanism designed to ensure economic sustainability, fairness, and efficiency in the dynamic and resource-constrained long-context LLM serving market such as those supporting long-context LLMs at mobile edge networks. The economic robustness of DDA is characterized by three critical properties, including individual rationality, incentive compatibility, and budget balance.

\textbf{Individual Rationality:} Individual rationality ensures that all participants, whether buyers or sellers, achieve a non-negative utility by participating in the auction. For buyers, the utility $u_i = k^t_i - p^*$ must satisfy $u_i \geq 0$, meaning that the clearing price $p^*$ does not exceed their bid $k^t_i$. Similarly, for sellers, the utility $\tilde{u}_j = p^* - o^t_j$ must satisfy $\tilde{u}_j \geq 0$, ensuring that sellers transact only if the price exceeds their minimum acceptable bid $o^t_j$. This property guarantees voluntary participation from all market actors, promoting market stability and trust.

\textbf{Incentive Compatibility:} The DDA is incentive compatible, meaning participants are encouraged to bid truthfully. For buyers, submitting a bid lower than their true valuation $k^t_i$ risks exclusion from the winning set $M^t_B$, while overstating their bid does not provide additional utility since the market-clearing price $p^*$ is determined independently of individual bids. For sellers, underbidding below $o^t_j$ risks accepting unfavorable prices, and overbidding may lead to exclusion from the winning set $N^t_S$. This mechanism ensures that strategic misreporting of bids does not result in higher utility, maintaining fairness and transparency.

\textbf{Budget Balance:} The DDA achieves strong budget balance, meaning the total payments collected from buyers equal the total payouts to sellers, ensuring no surplus or deficit for the auctioneer. This is achieved by ensuring that the number of winning buyers equals the number of winning sellers at market clearing, and all transactions occur at the uniform market-clearing price $p^*$. Thus, the auction mechanism is self-sustaining without requiring external subsidies or incurring deficits.

\textbf{Monotonicity and Criticality:} The DDA’s robustness is further reinforced through its bid monotonicity and criticality properties. Bid monotonicity ensures that winning buyers remain in the winning set $M^t_B$ by increasing their bids, and winning sellers remain in the set $N^t_S$ by lowering their bids. Criticality defines a threshold bid below or above which participants lose their winning status, ensuring that the market-clearing price is determined equitably and predictably.

The economic robustness of DDA directly translates to its effectiveness in serving long-context LLMs at mobile edge networks. By ensuring fair and stable participation through individual rationality and incentive compatibility, the mechanism aligns participants’ incentives with optimal resource allocation. Strong budget balance ensures that the system remains economically viable while dynamically adapting to the varying demands and constraints of LLM inference tasks. The monotonicity and criticality properties further enhance predictability, enabling robust market performance under dynamic conditions.

The DDA demonstrates computational efficiency in the long-context LLM serving market by leveraging a linear complexity model that scales with the number of participants and iterations. Each iteration of the auction involves constant-time clock updates ($O(1)$) and participant evaluations ($O(n + m)$), where $n$ and $m$ are the numbers of buyers and sellers, respectively. With $T$ iterations required for the buyer and seller clocks to converge, the total complexity is $O(T \cdot (n + m))$. The convergence rate is influenced by the step size $\Theta^t$, which balances precision and iteration count. This complexity ensures scalability, allowing the DDA to handle dynamic and large-scale market scenarios while maintaining efficient resource allocation and fairness in long-context LLM serving.

\section{Experimental Results}

\begin{table*}[!]
\centering\caption{Few-shot Performance of LaMDA-137B on Various Reasoning Tasks}
\label{tab:LaMDA-137B}
\resizebox{\textwidth}{!}{%
\begin{tabular}{@{}ccccccccccccccccc@{}}
\toprule
\multirow{2}{*}{Paths} & \multicolumn{2}{c}{ASDiv} & \multicolumn{2}{c}{MultiArith} & \multicolumn{2}{c}{SVAMP} & \multicolumn{2}{c}{GSM8K} & \multicolumn{2}{c}{Commonsense QA} & \multicolumn{2}{c}{Strategy QA} & \multicolumn{2}{c}{ARC} \\ \cmidrule(l){2-15} 
 & CoT & SC-CoT & CoT & SC-CoT & CoT & SC-CoT & CoT & SC-CoT & CoT & SC-CoT & CoT & SC-CoT & CoT & SC-CoT \\ \cmidrule(r){1-15}
0 & 49 & 44 & 52 & 47 & 39 & 34 & 17 & 15 & 58 & 55 & 65 & 63 & 55 & 51 \\ 
5 & 49 & 53 & 52 & 63 & 39 & 43 & 17 & 21 & 58 & 61 & 65 & 66 & 55 & 57 \\
10 & 49 & 56 & 52 & 69 & 39 & 49 & 17 & 24 & 58 & 62 & 65 & 67 & 55 & 58 \\
15 & 49 & 57 & 52 & 71  & 39 & 51 & 17 & 25 & 58 & 62 & 65 & 67 & 55 & 59 \\
20 & 49 & 58 & 52 & 73 & 39 & 52 & 17 & 26 & 58 & 62 & 65 & 68 & 55 & 59 \\
% 25 & 52 & 74 & 49 & 58 & 39 & 52 & 17 & 26 & 58 & 62 & 65 & 68 & 75 & 78 & 55 & 59 \\
% 30 & 52 & 75 & 49 & 58 & 39 & 52 & 17 & 27 & 58 & 62 & 65 & 68 & 75 & 78 & 55 & 59 \\
% 35 & 52 & 76 & 49 & 58 & 39 & 52 & 17 & 27 & 58 & 62 & 65 & 68 & 75 & 78 & 55 & 60 \\
% 40 & 52 & 77 & 49 & 58 & 39 & 53 & 17 & 28 & 58 & 62 & 65 & 68 & 75 & 78 & 55 & 60 \\ 
\bottomrule
\end{tabular}%
}
\end{table*}

% Please add the following required packages to your document preamble:
% \usepackage{booktabs}
% \usepackage{graphicx}
\begin{table*}[!]
\caption{Few-shot Performance of PaLM-540B on Various Reasoning Tasks}
\label{tab:PaLM-540B}
\resizebox{\textwidth}{!}{%
\begin{tabular}{@{}ccccccccccccccccccc@{}}
\toprule
\multirow{2}{*}{Paths} & \multicolumn{2}{c}{ASDiv} & \multicolumn{2}{c}{MultiArith} & \multicolumn{2}{c}{SVAMP} & \multicolumn{2}{c}{GSM8K} & \multicolumn{2}{c}{Commonsense QA} & \multicolumn{2}{c}{Strategy QA} & \multicolumn{2}{c}{ARC} \\ \cmidrule(l){2-15} 
 & CoT & SC-CoT & CoT & SC-CoT & CoT & SC-CoT & CoT & SC-CoT & CoT & SC-CoT & CoT & SC-CoT & CoT & SC-CoT \\ \cmidrule(r){1-15}
0 & 74 & 71 & 94 & 89 & 79 & 71 & 56 & 50 & 79 & 75 & 75 & 74 & 85 & 78 \\
5 & 74 & 78 & 94 & 97 & 79 & 83 & 56 & 65 & 79 & 79 & 75 & 78 & 85 & 86 \\
10 & 74 & 80 & 94 & 98 & 79 & 85 & 56 & 70 & 79 & 80 & 75 & 79 & 85 & 87 \\
15 & 74 & 81 & 94 & 98 & 79 & 85 & 56 & 71 & 79 & 80 & 75 & 80 & 85 & 88 \\
20 & 74 & 81 & 94 & 98 & 79 & 85 & 56 & 72 & 79 & 80 & 75 & 80 & 85 & 88 \\
% 25 & 92 & 94 & 74 & 81 & 94 & 98 & 79 & 85 & 56 & 72 & 79 & 81 & 75 & 80 & 95 & 96 & 85 & 88 \\
% 30 & 92 & 94 & 74 & 81 & 94 & 98 & 79 & 85 & 56 & 73 & 79 & 81 & 75 & 80 & 95 & 96 & 85 & 88 \\
% 35 & 92 & 94 & 74 & 81 & 94 & 98 & 79 & 85 & 56 & 74 & 79 & 81 & 75 & 81 & 95 & 96 & 85 & 88 \\
% 40 & 92 & 94 & 74 & 82 & 94 & 98 & 79 & 87 & 56 & 75 & 79 & 81 & 75 & 82 & 95 & 96 & 85 & 88 \\ 
\bottomrule
\end{tabular}%
}
\end{table*}
\begin{figure*}[t]
        \vspace{-0.5cm}
    \centering
    \subfigure[The system cost of the algorithms.]{\includegraphics[width=0.32\linewidth]{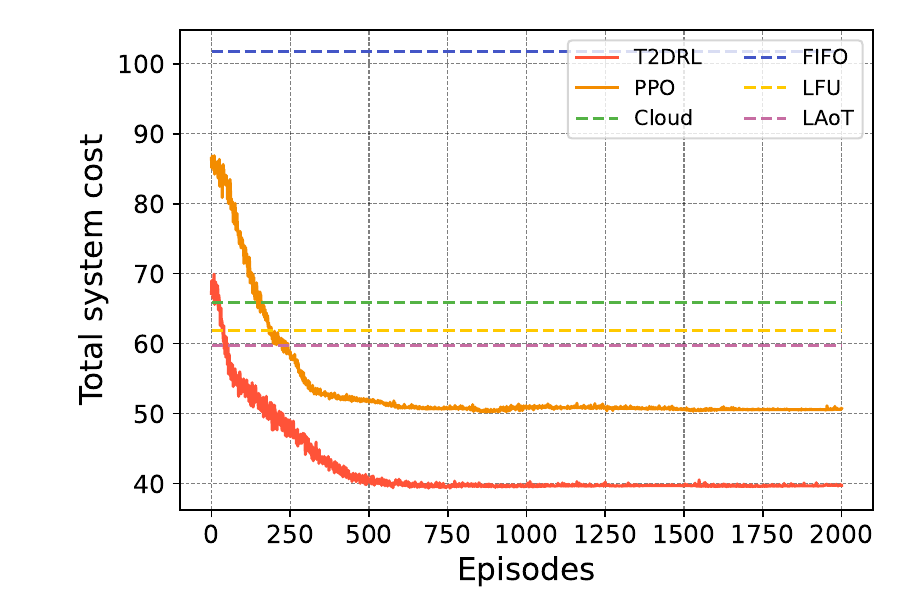}%
        \label{convergence1}}
    % \hfil
    \subfigure[The cost of T2DRL algorithm.]{\includegraphics[width=0.32\linewidth]{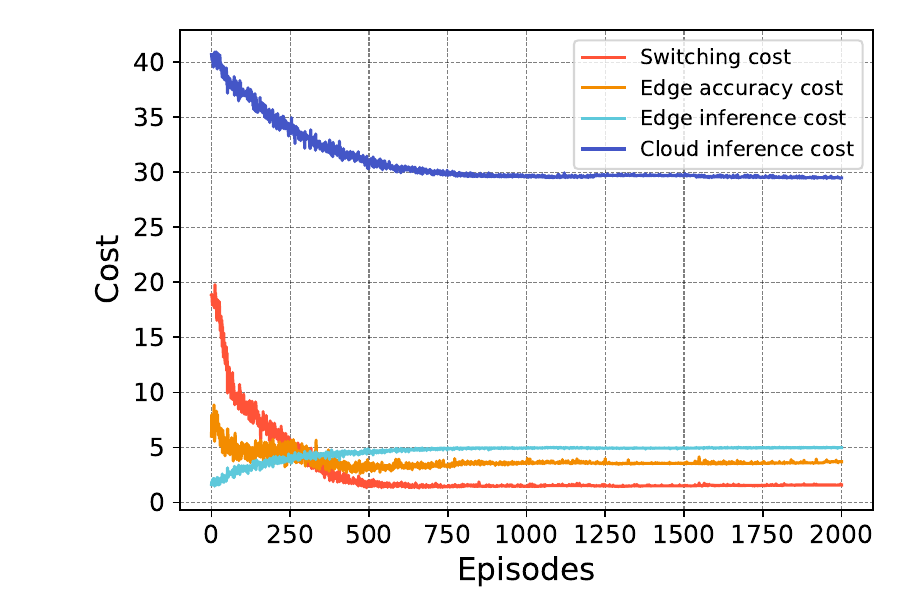}%
        \label{convergence2}}
    \subfigure[Reasoning performance of T2DRL algorithm.]{\includegraphics[width=0.32\linewidth]{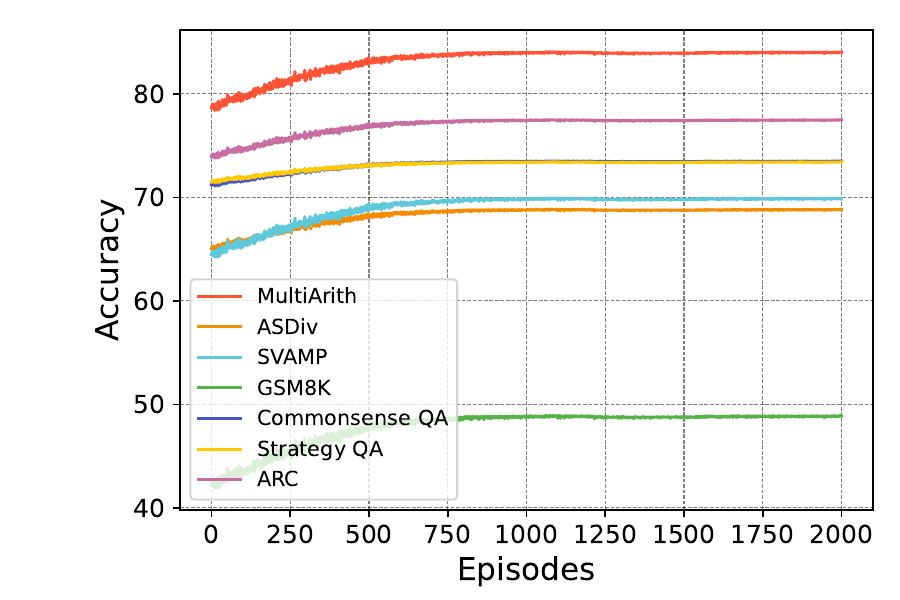}%
        \label{convergence3}}
    \caption{Convergence analysis of the proposed T2DRL algorithm.}
    \label{fig:convergence}
\end{figure*}

In the experimental setup, we consider a mobile edge system configured with varying parameters to evaluate the proposed framework and algorithm. The ESs are equipped with multiple GPUs with a total memory of 80 GB and a computing capability of 312,000 GFLOPs, supported by an energy cap of 300 W per GPU. For constructing LLM agents with perception and reasoning capability, we use the Imagebind~\cite{girdhar2023imagebind} as the perception module. In addition, LaMDA-137B and PaLM-540B~\cite{wang2022self} are adopted in the reasoning module to handle different requests. Furthermore, the network access costs are set to 0.0001 for edge access and 0.0075 for cloud access. The switching cost coefficient is set to 10$^{-5}$ and the accuracy cost coefficient is set to $2.5$. The size of input data samples is ranged from 100 to 200 tokens. The system is simulated for 100 steps to thoroughly test the performance of the proposed algorithm under dynamic network conditions.
The detailed reasoning performance of LaMDA-137B and PaLM-540B is listed in Table I and Table II.

\begin{figure*}[t]
        \vspace{-0.5cm}
    \centering
    \subfigure[Cost versus number of agents.]{\includegraphics[width=0.4\linewidth]{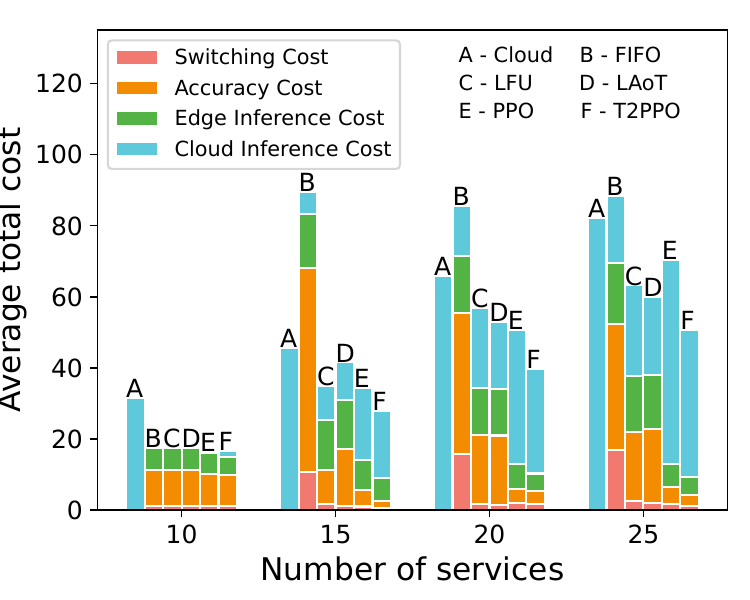}%
        \label{comparison1}}
    % \hfil
    \subfigure[Cost versus number of GPUs.]{\includegraphics[width=0.4\linewidth]{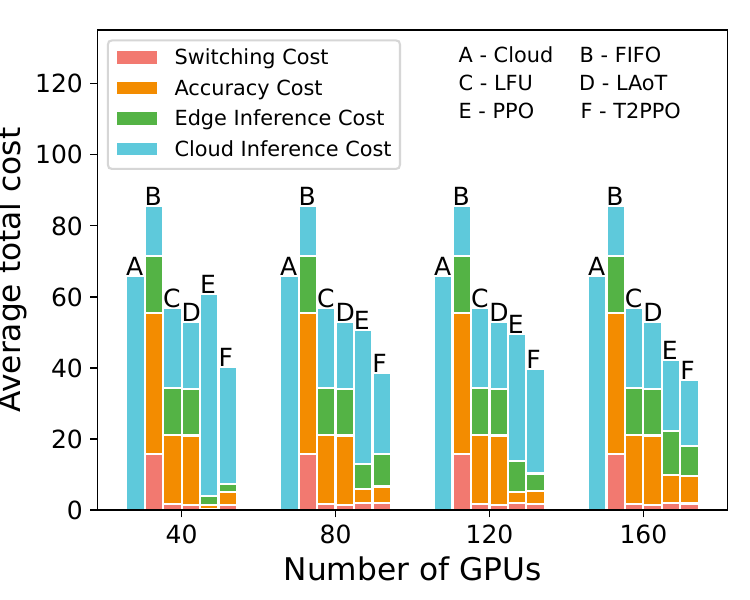}%
        \label{comparison2}}
    \subfigure[Cost versus number of reasoning paths.]{\includegraphics[width=0.4\linewidth]{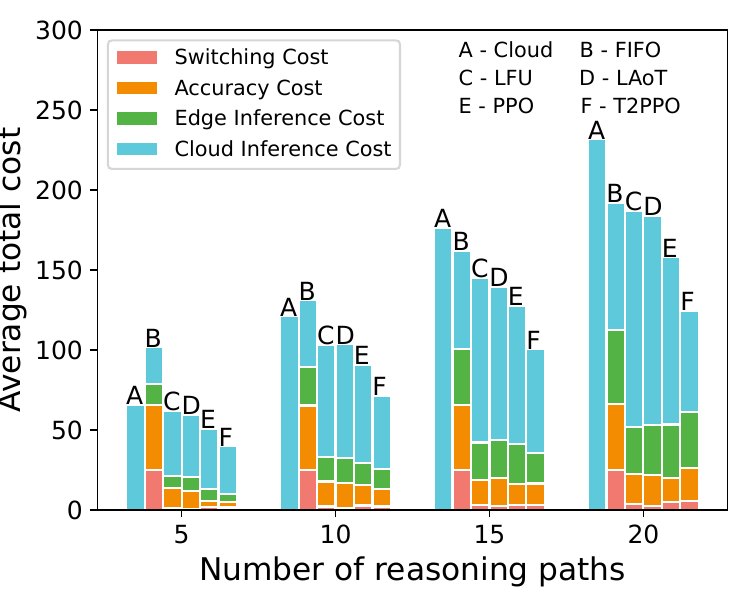}%
        \label{comparison3}}
    \subfigure[Cost versus vanishing factor.]
    {\includegraphics[width=0.4\linewidth]{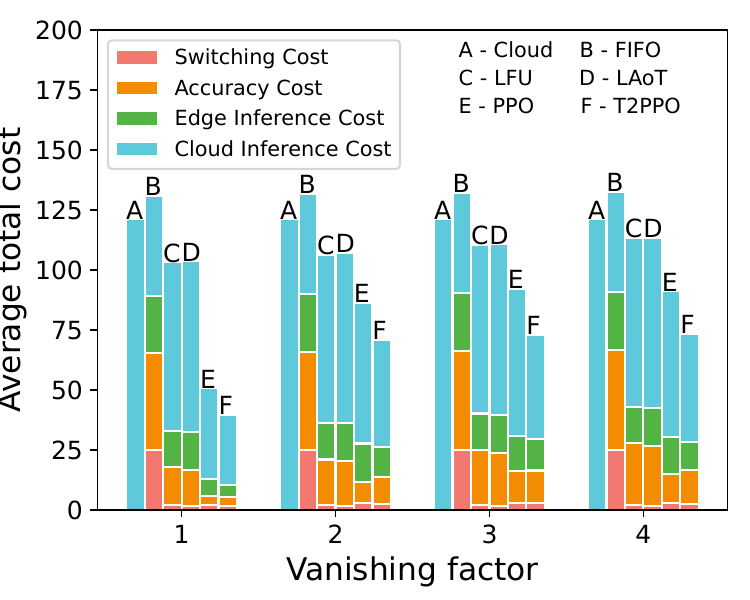}%
    \label{comparison4}}
    \caption{Performance comparison of the proposed T2DRL algorithm under different environment settings.}
    \label{fig:comparison}
\end{figure*}

In the configuration of the PPO and T2DRL, the learning rate is set to 3$\times$10$^{-4}$ and the discount factor is set to 0.95. We conducted the experiments with the seed of 42 to ensure reproducibility. We trained PPO and T2DRL over 2000 epochs, with each epoch consisting of 2000 steps. The actor-critic network of PPO utilizes hidden layers set at 128 units. The batch size is set at 128. Specific PPO adjustments included a value function coefficient of 0.25, an entropy coefficient of 0, and clipping epsilon at 0.2 for the policy updates. Additionally, the gradient is norm-capped at 0.5, with a GAE-lambda of 0.95 for advantage estimation. The TTT model is configured with a hidden size of 32, an intermediate size of 32, one hidden layer, four attention heads, and a mini-batch size of four.

% The proposed T2DRL algorithm, according to Fig.~\ref{fig:convergence}, exhibits a robust performance characterized by a rapid decrease in total system costs and a significant improvement in reasoning accuracy over 2000 episodes. According to Fig.~\ref{fig:convergence}(a), we can observe that T2DRL shows a steep decline in various costs including switching, edge accuracy, and cloud inference costs, and the T2PPO algorithm can converge to stable performance at around 500 episodes. After convergence, the T2DRL algorithm can reduce at least 30\% system cost compared with existing baselines and reduce 20\% cost compared with the PPO algorithm. Furthermore, a detailed analysis of convergence details of different costs is illustrated in Fig.~\ref{fig:convergence}(b). The decrease of total system cost is complemented by a consistent ascent in reasoning performance, as shown in Fig. \ref{fig:convergence}(c), achieving the highest accuracy by the end of the testing phase. Therefore, the convergence analysis and cost analysis demonstrate T2DRL's capability to significantly reduce system cost while enhancing the complex task-handling performance of LLM agents.

The proposed T2DRL algorithm demonstrates robust performance in reducing system costs and enhancing reasoning accuracy over 2000 episodes, as illustrated in Fig. 3. In Fig. 3(a), the T2DRL algorithm shows a rapid decline in total system costs during the initial episodes, achieving convergence at approximately 500 episodes. Compared to baseline algorithms such as FIFO, LFU, and Cloud, T2DRL achieves at least a 30\% reduction in system costs, while outperforming the PPO algorithm by 20\% in cost efficiency after convergence. This significant reduction highlights the algorithm's effectiveness in dynamic resource allocation and context-aware inference optimization. Fig. 3(b) provides a breakdown of cost components, including switching costs, edge accuracy costs, edge inference costs, and cloud inference costs. The results reveal that T2DRL effectively minimizes switching costs and balances the trade-offs between edge and cloud inference costs, stabilizing these values as the episodes progress. Notably, cloud inference costs experience a sharp decline early on, indicating the algorithm's ability to prioritize edge-based solutions for improved cost efficiency. The edge accuracy cost remains stable after convergence, reflecting the algorithm’s ability to maintain high inference quality while managing resources efficiently. Furthermore, Fig. 3(c) highlights the reasoning performance of the T2DRL algorithm across various datasets, such as MultiArith, ASDiv, SVAMP, GSM8K, and Commonsense QA. The reasoning accuracy consistently improves throughout the training process, reaching a peak by the final episodes. T2DRL achieves the highest accuracy across all tested tasks, demonstrating its capability to handle complex reasoning tasks effectively. This improvement underscores the algorithm's ability to adaptively leverage context windows and optimize long-context LLM serving. In conclusion, the convergence analysis, as detailed in Fig. 3, underscores the T2DRL algorithm's ability to significantly reduce system costs while consistently enhancing the reasoning performance of LLM agents, making it a highly effective framework for dynamic and resource-constrained mobile edge environments.

% Figure~\ref{fig:comparison} demonstrates the superior performance of the T2DRL algorithm compared to various baselines, including, Cloud, FIFO, LFU, Least Age-of-Thought (LAoT)~\cite{xu2024cached}, and PPO, across different system settings. In terms of the number of agents, as illustrated in Fig.~\ref{fig:comparison}(a), T2DRL consistently maintains the lowest total system cost as the number of agents increases from 10 to 25, outperforming other algorithms by effectively managing multiple agents and reducing switching, accuracy, edge inference, and cloud inference costs. When considering the number of GPUs in Fig.~\ref{fig:comparison}(b), T2DRL shows a significant reduction in total costs as the number of GPUs increases from 40 to 160, demonstrating its ability to efficiently utilize additional computational resources, resulting in lower costs than those of existing baselines. Evaluating the number of reasoning paths in Fig.~\ref{fig:comparison}(c), T2DRL achieves substantial cost savings by optimizing the management of reasoning paths, with total costs decreasing as the number of paths increases from 5 to 20, highlighting its effectiveness in minimizing switching and accuracy costs. Lastly, under varying vanishing factors Fig.~\ref{fig:comparison}(d), T2DRL consistently exhibits the lowest average total cost under the number of reasoning paths equal to 10, showcasing its robustness in handling scenarios where context relevance diminishes over time. 
\begin{figure*}[t]
    \centering
    \includegraphics[width=0.8\linewidth]{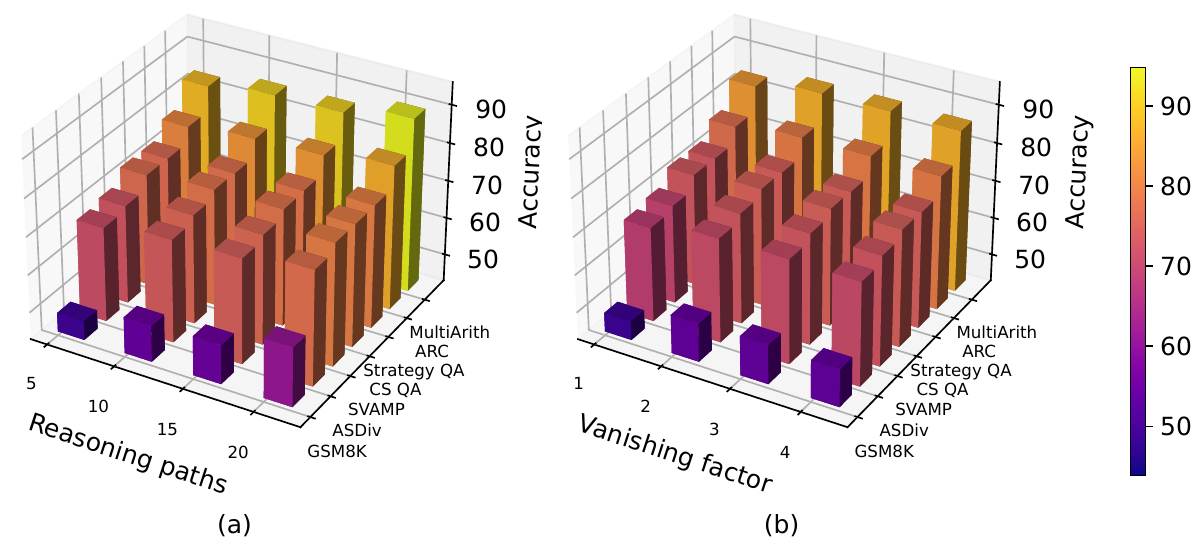}
    \caption{Reasoning accuracy of the T2DRL algorithm under different (a) reasoning paths; (b) vanishing factor.}
    \label{fig:comparison6}
\end{figure*}
Figure 4 showcases the superior performance of the T2DRL algorithm compared to various baseline methods, including Cloud, FIFO, LFU, Least Age-of-Thought (LAoT), and PPO, across diverse system configurations. As shown in Fig. 4(a), T2DRL consistently achieves the lowest total system cost as the number of agents increases from 10 to 25. This significant cost reduction is attributed to T2DRL's effective management of multiple agents, which minimizes switching costs while optimizing edge and cloud inference costs. In contrast, baseline algorithms such as FIFO and Cloud exhibit substantial increases in costs as the number of agents grows, highlighting T2DRL's scalability and resource efficiency in high-demand scenarios. Fig. 4(b) demonstrates that T2DRL effectively reduces total costs as the number of GPUs increases from 40 to 160. This result underscores T2DRL's ability to adapt to additional computational resources by efficiently distributing workloads and leveraging GPU capacity. Compared to baseline methods, T2DRL achieves consistently lower costs across all GPU configurations, with significant savings observed in edge inference costs and reduced reliance on cloud inference. When analyzing the number of reasoning paths, as depicted in Fig. 4(c), T2DRL shows remarkable cost savings as the number of paths increases from 5 to 20. The algorithm optimally manages reasoning paths by reducing switching and accuracy costs, which dominate in scenarios requiring complex multi-path reasoning. While the baseline algorithms struggle to maintain efficiency with an increasing number of paths, T2DRL demonstrates resilience and maintains low total system costs. Lastly, Fig. 4(d) evaluates the performance of T2DRL under varying vanishing factors, which represent the diminishing relevance of contextual information over time. T2DRL exhibits the lowest average total cost across all vanishing factor settings, effectively mitigating the impact of context loss by balancing edge inference and switching costs. In comparison, baseline algorithms like Cloud and FIFO display higher costs due to inefficient context management and over-reliance on cloud resources. Overall, the experimental results in Fig. 4 highlight T2DRL's robust adaptability and efficiency across different environmental settings. By minimizing switching, accuracy, edge inference, and cloud inference costs, T2DRL significantly outperforms existing baselines in scenarios with varying agent numbers, GPU availability, reasoning paths, and vanishing factors. These findings underscore T2DRL's potential as a highly effective solution for dynamic and resource-constrained mobile edge networks.
  
% Specifically, Fig.~\ref{fig:comparison6} illustrates the reasoning performance of the T2DRL algorithm tested on real-world datasets, including MultiArith, ARC, Strategy QA, Commonsense QA, SVAMP, ASDiv, and GSM8K, showing that as the number of reasoning paths increases from 5 to 20 in Fig.~\ref{fig:comparison6}(a), and the vanishing factor rises from 1 to 4 in Fig.~\ref{fig:comparison6}(b), the accuracy consistently improves across various datasets. A higher value is achieved in conditions with more reasoning paths and higher vanishing factors. This indicates that the T2DRL algorithm effectively utilizes additional reasoning paths and maintains context relevance more effectively with higher vanishing factors, leading to enhanced reasoning accuracy across different datasets.
Specifically, Fig.~\ref{fig:comparison6} illustrates the reasoning performance of the T2DRL algorithm tested on real-world datasets, including MultiArith~\cite{roy2016solving}, ARC~\cite{clark2018think}, Strategy QA~\cite{geva2021did}, Commonsense QA~\cite{talmor2018commonsenseqa}, SVAMP~\cite{wang2021adversarial}, ASDiv~\cite{zhang2019improve}, and GSM8K~\cite{cobbe2021training}, showing that as the number of reasoning paths increases from 5 to 20 in Fig.~\ref{fig:comparison6}(a), and the vanishing factor rises from 1 to 4 in Fig.~\ref{fig:comparison6}(b), the accuracy consistently improves across various datasets. As illustrated in Fig. 5(a), the reasoning accuracy consistently improves as the number of reasoning paths increases from 5 to 20 across all datasets. The datasets with simpler reasoning requirements, such as GSM8K and SVAMP, show moderate improvements, while datasets like MultiArith and ARC, which demand more complex reasoning, exhibit steeper accuracy gains with additional paths. This trend highlights T2DRL's ability to leverage multi-path reasoning effectively, balancing computational overhead with improved inference quality. The results indicate that T2DRL dynamically optimizes resource allocation for reasoning paths, enabling the LLM agents to explore diverse reasoning chains and enhance overall performance. In Fig. 5(b), reasoning accuracy improves with an increase in the vanishing factor, which reflects the relevance of context over time. Datasets that rely heavily on maintaining contextual consistency, such as Strategy QA and Commonsense QA, see the most significant improvements. As the vanishing factor rises from 1 to 4, T2DRL demonstrates a remarkable ability to maintain and utilize context effectively, preventing degradation in inference quality. This result underscores the algorithm's robustness in scenarios where context relevance decays over time, as it adapts to mitigate information loss and prioritize critical reasoning components. Across both subfigures, T2DRL achieves high accuracy consistently, with MultiArith and ARC emerging as the top-performing datasets, achieving accuracy levels close to 90\% in favorable conditions. This pattern indicates that T2DRL is well-suited for handling a wide variety of reasoning tasks, from numerical and logical reasoning (MultiArith) to commonsense and contextual reasoning (Commonsense QA, Strategy QA).
% An example of a floating figure using the graphicx package.
% Note that \label must occur AFTER (or within) \caption.
% For figures, \caption should occur after the \includegraphics.
% Note that IEEEtran v1.7 and later has special internal code that
% is designed to preserve the operation of \label within \caption
% even when the captionsoff option is in effect. However, because
% of issues like this, it may be the safest practice to put all your
% \label just after \caption rather than within \caption{}.
%
% Reminder: the "draftcls" or "draftclsnofoot", not "draft", class
% option should be used if it is desired that the figures are to be
% displayed while in draft mode.
%
%\begin{figure}[!t]
%\centering
%\includegraphics[width=2.5in]{myfigure}
% where an .eps filename suffix will be assumed under latex, 
% and a .pdf suffix will be assumed for pdflatex; or what has been declared
% via \DeclareGraphicsExtensions.
%\caption{Simulation results for the network.}
%\label{fig_sim}
%\end{figure}

\begin{figure}
    \centering
    \includegraphics[width=1\linewidth]{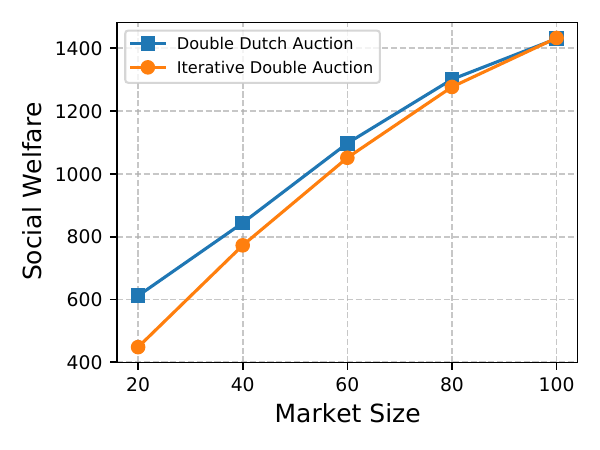}
    \caption{Social welfare verse market size.}
    \label{fig:auction}
\end{figure}

To evaluate the long-context LLM serving market and proposed auction mechanism, Fig.~\ref{fig:auction} demonstrates the social welfare performance of Iterative Double Auction (IDA) and DDA across varying market sizes. Both mechanisms exhibit a consistent upward trend in social welfare as market size increases, reflecting their ability to allocate resources more efficiently with greater market activity. However, the DDA consistently outperforms the Iterative Double Auction across all market sizes, showcasing its superior efficiency and scalability. At smaller market sizes (e.g., 20 and 40), the DDA achieves the social welfare approximately 36\% higher than IDA, indicating its ability to optimize outcomes even in resource-constrained or limited-participant scenarios. As the market size grows to 60, 80, and eventually 100, the gap between the two mechanisms narrows slightly. Despite this convergence, the DDA maintains a clear advantage, achieving higher social welfare values across the board. For instance, at a market size of 100, the DDA outperforms the IDA by approximately 12\%, demonstrating its continued robustness and adaptability in larger, more complex markets. The superior performance of the DDA can be attributed to its dynamic and synchronized pricing mechanism, which effectively matches supply and demand while maximizing participant utility. This ensures optimal resource allocation and minimizes inefficiencies, particularly in scenarios with heterogeneous market participants and varying resource demands.

\section{Conclusion}

In this paper, we have proposed a universal long-context LLM serving framework for supporting the perception and complex reasoning of LLM agents in mobile edge networks, which leverages the T2DRL algorithm to optimize model caching and inference offloading adaptively. Specifically, we have designed the system model to optimize the deployment and execution of long-context LLMs in mobile edge networks by dynamically managing model caching and inference offloading, considering resource constraints such as GPU memory and computational power. The theoretical analysis introduces bounds on ambiguity and error, assessing the system's capability to maintain accuracy and reliability under varied conditions, particularly in handling long-context demands efficiently. The proposed T2DRL algorithm can learn and adapt the optimal policy in real-time, optimizing caching and inference offloading proactively to reduce system costs and ensure the perception and reasoning performance of LLM agents in dynamic edge systems. Additionally, we have leveraged the DDA mechanism for resource allocation by dynamically matching supply and demand, ensuring fairness and transparency while maximizing the social welfare. Experimental results have highlighted the effectiveness of the proposed algorithm in reducing total system cost for LLM serving in mobile edge networks.

% conference papers do not normally have an appendix

% use section* for acknowledgment
% \section*{Acknowledgment}

% The authors would like to thank\ldots

% trigger a \newpage just before the given reference
% number - used to balance the columns on the last page
% adjust value as needed - may need to be readjusted if
% the document is modified later
%\IEEEtriggeratref{8}
% The "triggered" command can be changed if desired:
%\IEEEtriggercmd{\enlargethispage{-5in}}

% references section

% can use a bibliography generated by BibTeX as a .bbl file
% BibTeX documentation can be easily obtained at:
% http://mirror.ctan.org/biblio/bibtex/contrib/doc/
% The IEEEtran BibTeX style support page is at:
% http://www.michaelshell.org/tex/ieeetran/bibtex/
%\bibliographystyle{IEEEtran}
% argument is your BibTeX string definitions and bibliography database(s)
%\bibliography{IEEEabrv,../bib/paper}
%
% <OR> manually copy in the resultant .bbl file
% set second argument of \begin to the number of references
% (used to reserve space for the reference number labels box)
% \begin{thebibliography}{1}

% \bibitem{IEEEhowto:kopka}
% H.~Kopka and P.~W. Daly, \emph{A Guide to \LaTeX}, 3rd~ed.\hskip 1em plus
%   0.5em minus 0.4em\relax Harlow, England: Addison-Wesley, 1999.

% \end{thebibliography}

\bibliographystyle{IEEEtran}
\bibliography{IEEEabrv,main}

% that's all folks
\end{document}